\documentclass[preprint,12pt,authoryear]{elsarticle}

\usepackage{amssymb}

\usepackage{amsmath}
\usepackage{float}
\usepackage{booktabs}
\usepackage[bookmarks=true,bookmarksopen=true,colorlinks=true,linkcolor=blue]{hyperref}

\providecommand{\jelclassif}[1]
{
	\small	
	\text{\textit{JEL Classification:}} #1
}

\usepackage{threeparttable}
\usepackage{setspace}
\newcommand{\notesize}{\setstretch{1}\scriptsize} 

\usepackage{pifont}        
\newcommand{\cmark}{\ding{51}} 
\newcommand{\xmark}{\ding{55}} 
\newcommand{\appref}[1]{\hyperref[#1]{\ref*{#1}}}
  \usepackage{subcaption}   

\newtheorem{theorem}{Theorem}
\newtheorem{proposition}{Proposition}

\makeatletter
\def\@fnsymbol#1{} 
\makeatother

\begin{document}

\begin{frontmatter}



\title{Measuring the depth of multidimensional poverty with ordinal data}


\author[inst1]{Fernando Flores Tavares}

\affiliation[inst1]{organization={Department of Economics and Statistics, University of Siena, Italy},
email={Email: fernando.tavares@unisi.it}}

\date{}



\begin{abstract}
Standard multidimensional poverty gap measures are seldom applied because they require cardinal indicators. This paper proposes a positional poverty gap index that captures the depth of deprivations even when indicators are ordinal. Drawing on the fuzzy set literature, the method scores each deprived individual by their positional distance from the top of the empirical distribution of each indicator, normalized by the same distance for the most deprived individual. An axiomatic characterization shows that, once deprivation is built up additively from pairwise comparisons of achievements, this positional depth score is the only one consistent with ordinal data. Within the Alkire–Foster counting framework, the measure preserves the identification and aggregation structure and, under a fixed reference distribution, satisfies the main axiomatic properties. The result is a meaningful and policy-relevant depth index for multidimensional poverty.
\end{abstract}



\begin{keyword}
Multidimensional poverty \sep positional poverty gap \sep relative deprivation \sep ordinal variables \sep Alkire--Foster method \sep fuzzy-set approach
\medskip

\jelclassif{I32, C43, O15, D63}

\bigskip

\bigskip
\centering
\today

\end{keyword}

\end{frontmatter}


\section{Introduction}
This paper introduces a positional poverty gap measure for multidimensional poverty, extending the Alkire–Foster counting framework to capture deprivation depth when indicators are ordinal. The contribution bridges the Alkire–Foster (AF) method \citep{AlkireFoster2011} and the Totally Fuzzy and Relative (TFR) approach \citep{CheliLemmi1995}. It provides a coherent framework for incorporating depth into multidimensional poverty measurement while preserving the identification and aggregation structure of the counting approach.

 In the unidimensional context, depth is typically measured through the poverty gap, the normalized shortfall from the poverty line, formalized by \citet{Sen1976} and generalized by \citet{Foster1984DecomposablePoverty} into a class of measures. \citet{AlkireFoster2011} extended the Foster–Greer–Thorbecke (FGT) class of measures to a multidimensional setting, capturing the extent, breadth, and depth of multidimensional poverty. The latter is measured through the multidimensional analogue of the FGT poverty gap, which captures the normalized shortfall of an individual's achievement from the deprivation cutoff.
This measure, however, requires cardinal indicators defined on a ratio scale, as the normalization of shortfalls presupposes a meaningful natural zero \citep{AlkireFoster2011}. Ordinal indicators do not convey meaningful information about the magnitude of differences between categories, while interval-scale indicators do not support the ratio comparisons required by the gap measure. In practice, empirical
applications of the AF framework rarely implement the poverty gap component because most poverty indicators are ordinal \citep{dattoma2024multidimensional}. As a result, the depth of multidimensional poverty remains largely
understudied.

To fill this gap, I build on the TFR fuzzy-set membership function, adapt and characterize it, and propose a new interpretation as a positional poverty gap that scores each deprived individual by their normalized positional distance from the top of the distribution of each indicator. This is equivalent to the share of the population with higher outcomes, divided by the same share for the most deprived individual. Depth thus increases as the individual moves closer to the worst observed outcome.

To establish the normative and informational basis of the depth score, I first characterize it formally. I show that it is not merely one among several admissible rank transformations, but the unique normalized deprivation score consistent with both an additive structure of pairwise achievement comparisons and the informational content of ordinal data. This characterization draws on the theory of relative deprivation: the score is the ordinal counterpart of Yitzhaki's relative-deprivation index \citep{Yitzhaki1979}, retaining the share of the population that is better off once the magnitudes of the shortfalls are no longer defined. I then analyze the behavior of the positional poverty gap within the AF class of measures by formalizing and proving its axiomatic properties. Holding the reference distribution fixed, I show that the resulting measure satisfies the main desirable axioms, including poverty focus, monotonicity, subgroup consistency, and decomposability, thereby ensuring predictable behavior across poverty comparisons \citep{AlkireEtAl2015}. Because positional depth is defined relative to a distribution, I also set out the conditions under which the measure is comparable across populations and over time, showing that fixing or pooling the reference distribution supports such comparisons while making explicit the trade-offs of each choice.

I then illustrate the methods with two applications. Using administrative data from Brazil, I show that individuals deprived in few indicators can still exhibit high positional depth scores within those indicators, a pattern that incidence and intensity alone miss. Among Bolsa Família beneficiaries, positional depth varies across households with similar income, so income alone does not identify those who are most severely deprived. Using cross-country data from the Multiple Indicator Cluster Surveys, I compare the positional poverty gap measure with the AF poverty gap where cardinal indicators are available. A formal correspondence result guarantees that the two rank the poor identically within each dimension, so in the data they rank individuals similarly overall, while their differences, which arise only across dimensions, reveal complementary aspects of poverty depth.

Beyond the measure and its characterization, the contributions of this paper are also practical and policy-relevant. A first contribution is to open depth measurement to a wider range of empirical settings. For instance, it allows researchers to ask, in ordinal and multidimensional settings, how deeply deprived the poor are, whether depth and intensity point to the same individuals and groups, and how the distribution of depth varies within the poor population. The global Multidimensional Poverty Index (MPI) \citep{undp2025globalmpi} and most national multidimensional poverty indices rely heavily on ordinal indicators. The proposed measure is directly applicable to these existing frameworks without requiring new data collection or changes to indicator selection.

Depth measurement, in turn, has a direct policy application. By identifying individuals with few but deep deprivations, the positional poverty gap measure can help direct resources toward those whom standard measures may give lower priority, not because they are not poor, but because their poverty is deep rather than broad. Moreover, by incorporating a relative component to the AF class of measures, the positional poverty gap index accounts for the social context of deprivation, complementing the absolute measures of incidence and intensity. 

The paper proceeds as follows. Sections \ref{section: background} and \ref{section: notation} review the background literature and introduce the notation. Section \ref{section: ind_dev} discusses the construction of indicators and the conditions required for ordinally consistent measurement. Section \ref{section:pov_indent} presents the poverty identification approach. Sections \ref{section: intensity} and \ref{section: distributional_poverty_gap} define the intensity and positional poverty gap components of the index. Section \ref{section: overall_poverty} combines them into the overall multidimensional poverty index. Section \ref{section: properties} characterizes the positional depth score and establishes the axiomatic properties of the measure. Section \ref{section: illustrations} presents two empirical illustrations, using administrative data from Brazil and cross-country survey data from the Multiple Indicator Cluster Surveys. Section \ref{section: conclusion} concludes.

\section{Background}
\label{section: background}

\subsection{Counting, fuzzy, and ordinal approaches}
Despite its success, the AF approach is not immune to criticism. Three limitations are particularly relevant. First, most applications cannot capture the depth of deprivation within each dimension, since the AF poverty-gap and squared-gap measures require cardinal indicators. Second, the additive aggregation structure does not account for cross-dimensional complementarity, that is, the possibility that the joint effect of several simultaneous deprivations exceeds the sum of their separate effects. Third, standard counting measures have limited sensitivity to how deprivation burdens are distributed across individuals.

Several contributions address the second limitation by assigning disproportionately greater weight to cumulative deprivation \citep{pattanaik2018,rippin2010,datt2019distribution}. Others address the third by making measures sensitive to how deprivation counts are distributed and associated across individuals and dimensions \citep{AabergeEtAl2019,Berenger2017}, or by distinguishing basic from non-basic attributes \citep{DhongdeEtAl2016}. Although these extensions broaden the scope of multidimensional poverty measurement, they do not capture the depth of deprivation within each indicator. This paper focuses on the first limitation. Nevertheless, its individual-level depth scores could be incorporated into alternative cross-dimensional and interpersonal aggregation functions designed to address the second and third limitations.

The first limitation has been addressed directly, but only partially. \citet{SethYalonetzky2021} study the unidimensional case and characterize a class of depth-sensitive poverty measures for a single ordinal variable. Their measures aggregate the population shares in deprived categories using ordering weights, with larger weights assigned to more deprived categories. This captures depth within the variable without treating the distances between ordinal categories as directly cardinal. However, because the ordering weights are defined for one ordinal indicator at a time, the framework does not provide a rule for aggregating depth across indicators with different category structures. A related contribution by \citet{pattanaikxu2019} considers the multidimensional case, allowing each attribute to have more than two ordered categories. Their framework maps the categories of each attribute into an increasing dimension-specific scoring function and combines these scores in an overall deprivation measure. However, the paper characterizes a broad family of admissible scoring rules rather than providing a fully specified operational procedure for measuring depth within each ordinal indicator. An operational depth measure spanning multiple ordinal indicators is therefore still missing.

 A parallel literature based on fuzzy-set approaches deals with ordinality using rankings. A foundational contribution is the distribution-based fuzzy approach of \citet{CheliLemmi1995}, whose TFR method derives membership values purely from rank positions and therefore does not require a cardinal interpretation of the data. \citet{BettiVerma1999} proposed an alternative formulation based on the complement of the Lorenz curve computed from weighted achievements, which instead requires cardinal measurement. \citet{BettiEtAl2006} showed how these two distribution-based membership functions can be combined within a unified framework. Building on this work, \citet{Betti2015Comparative} proposed a multidimensional implementation in which indicators are first transformed using the normalized TFR, enabling the consistent application of the Lorenz-curve complement across indicators. The depth index proposed here builds on this transformation approach, reinterpreting and adapting it to measure deprivation depth.
  
  Despite the advantages, fuzzy approaches have two limitations relevant here. First, without a deprivation threshold or censoring, all but those in the top category receive non-zero scores, violating the deprivation and poverty-focus axioms relative to any meaningful deprivation cutoff. Second, the overall fuzzy index yields a single aggregate measure without separating poverty into incidence, intensity, and depth, making it incompatible with the reporting framework that has made AF attractive for policy.

Two papers have attempted to bridge the fuzzy and counting approaches.
  \citet{Kobus2017} develops a fuzzy counterpart to the AF MPI
  linking union and dual-cutoff identification, and
  \citet{Gangopadhyay2020} propose a comprehensive axiomatic
  framework combining absolute deprivation with a distributional
  component. Both preserve elements of the AF structure, but both
  retain a cardinal deprivation gap as the within-dimension depth
  score \citep[in the tradition of][]{BourguignonChakravarty2003}, making them inapplicable when indicators are ordinal.

This paper draws on the counting and fuzzy-set traditions to address the depth problem raised by the literature on ordinal poverty measurement. It keeps the AF framework structure, incorporating the distribution-based score of the
  fuzzy tradition within it and formalizing the score and its axiomatic properties for ordinal indicators in a multidimensional setting.  The measure applies to mixed indicator sets combining ordinal, cardinal, and binary variables within the same index. In doing so, it provides a
  new interpretation of the score as a positional analogue of the AF poverty gap. The overall index factors into incidence, intensity, and depth, a structure
  the fuzzy literature does not provide when ordinal variables are included.

  \subsection{Relative poverty approaches}

The distinction between absolute and relative poverty is long-standing in the literature. Writing in the eighteenth century, \citet{smith1776wealth} observed that necessities include not only what is indispensable for survival, but also what custom renders as necessary for social participation. \citet{townsend1979poverty} similarly argued that poverty is inherently relative: individuals are poor when they lack the resources to participate in the customary activities and living conditions of their society\footnote{Consistent with this perspective, the EU adopts a relative definition of poverty rooted in the concept of socially acceptable living standards, originating in the 1975 Council Decision and later formalized in EU policy documents in the early 2000s \citep{council1975povertyprogramme,europeancommission2004joint,spc2001indicators}. The at-risk-of-poverty (AROP) indicator operationalizes this concept using a relative income threshold set at 60\% of the national median, while the broader AROPE framework reflects its multidimensional nature by combining income poverty with material deprivation and labor market exclusion.}. \citet{sen1983poor} refined this view by distinguishing between an absolute core in capabilities and a relative component in resources. Drawing on Smith’s reasoning, he illustrates this point by noting that the relevant capability, such as social participation, is constant across societies, while the resources required to achieve it vary: a linen shirt in eighteenth-century Britain as a condition for appearing in public without shame, and different goods fulfilling the same role in other times and places, such as housing, or access to utilities and other social infrastructure required for participation.

In the measurement literature, the distinction between absolute and relative poverty was later formalized, notably by \citet{foster1998absoluterelative}, who analyzed poverty measures under fixed and distribution-dependent poverty lines, and by \citet{ravallion2011weaklyrelative}, who proposed weakly relative poverty lines that rise with average income above an absolute floor, preserving a minimum while reflecting social context. Building on this approach, \citet{RavallionChen2019} show that the choice of the comparison income against which relative needs are assessed can materially affect global poverty measurement. In these formalizations, relativity is confined to the definition of the poverty threshold.

Extending this threshold-based approach to the multidimensional setting, \citet{dotterklasen2020absolute} apply \citet{ravallion2011weaklyrelative}'s framework within the AF multidimensional setting by linking deprivation thresholds to national standards, aiming to capture Sen’s distinction between absolute functionings and relative resources. In their approach, the relative element applies only to the thresholds of resource-like indicators, while the indicators themselves remain absolute.

Beyond poverty line definitions, the literature on relative deprivation concerns how an individual's standing is assessed in relation to others. \citet{Runciman1966} described relative deprivation as a perceived grievance arising from comparisons with a reference group of better-off persons. \citet{Yitzhaki1979} quantified this idea into an income-based index equal to the proportion of society that is richer than the individual multiplied by the average income gap separating the individual from that group. The index thus measures an objective condition, being outranked and by how much, rather than a perceived one. This index was subsequently characterized axiomatically by \citet{EbertMoyes2000} and, under a framework in which the reference group is the entire society, by \citet{BossertDambrosio2006}. \citet{Kakwani1984RelativeDeprivation} developed a distributional representation of the same measure through the relative-deprivation curve. These indices are designed for monetary poverty and require cardinal variables, because they capture not only the number of individuals who are better off, but also the magnitude of the differences separating them.

A directly related perspective is offered by \citet{CheliLemmi1995}, discussed above, who propose a fully relative measure extended to multidimensional poverty. Unlike approaches that rely on fixed deprivation thresholds, both the identification of the poor and the measurement of deprivation depend entirely on the distribution of achievements in the population. Poverty is thus determined by individuals’ relative positions rather than by comparison to an absolute standard.

 The proposed measure here embeds TFR logic within the AF framework but retains fixed deprivation thresholds, combining absolute identification with a positional assessment of depth. The cutoff determines whether a minimum standard is unmet, while the reference distribution determines how exceptional the resulting position is. 
 
The fuzzy tradition discussed above supplies the empirical form of the score, while the relative-deprivation literature supplies its formal structure, in which deprivation is built up from pairwise comparisons of achievements. Subsection~\ref{section: rd_foundation} combines that structure with ordinal invariance into a formal characterization of the positional depth score.

\section{Notation}
\label{section: notation}
Let $X$ denote the achievement matrix of size $n \times d$,
where rows index individuals $i = 1,2,\dots,n$ and columns index indicators
$j = 1,2,\dots,d$. The entry $x_{ij}$ represents the achievement of 
individual $i$ in indicator $j$.  

Each indicator $j$ is associated with a deprivation cutoff $z_j$. 
I define the deprivation status of $i$ in $j$ as
\begin{equation}
g_{ij}^0 = \mathbf{1}\{x_{ij} < z_j\},
\end{equation}
where $\mathbf{1}\{\cdot\}$ is the indicator function that equals one if the 
statement is true and zero otherwise\footnote{%
The $g_{ij}^0$ elements form the deprivation matrix $g^0$ and $z_j$ deprivations cutoffs form the vector $z$. Let
\[
g^{0} \;=\; [\,g^{0}_{ij}\,]_{n\times d}
= \begin{bmatrix}
g^{0}_{11} & \cdots & g^{0}_{1d}\\
\vdots     &        & \vdots\\
g^{0}_{n1} & \cdots & g^{0}_{nd}
\end{bmatrix}
\qquad
z=
\begin{bmatrix}
z_1\\
\vdots\\
z_d
\end{bmatrix}.
\]
}. Indicators are assigned relative importance through a weight vector 
$w=(w_1,\dots,w_d)$, with $\sum_{j=1}^d w_j = 1$.  

In addition, to capture depth I define the empirical cumulative distribution 
function (CDF) for each indicator $j$ as
\begin{equation}
F_j(x) = \frac{1}{n}\sum_{i=1}^n \mathbf{1}\{x_{ij} \leq x\},
\end{equation}
which gives the proportion of individuals whose achievement in $j$ does not 
exceed $x$. Its complement, $1-F_j(x_{ij})$, represents the share of the 
population that has strictly higher achievements than individual $i$ in 
indicator $j$\footnote{In the empirical application, a population-weighted version of $F_j$ is used.}. This transformation places all indicators on a common 
$[0,1]$ scale, enabling meaningful comparisons across different types of 
variables.

\section{Measurement conditions}
\label{section: ind_dev}
Many factors need consideration when building an indicator, such as indicator selection, unit of identification, applicable population, and others \citep[see][]{AlkireEtAl2015, Tavares2024Gender}. These factors depend on the study's objective and the researcher's judgment values. In this section, I focus on desirable conditions to build ordered indicators, apply admissible transformations, and ensure comparability, considering meaningfulness, uncertainties, and data availability\footnote{This section draws partially from insights presented in Chapter 2 of  \citet{AlkireEtAl2015}.}. 

\subsection{Indicators construction}
\label{section: 3.1}
One of the advantages of the proposed method is to use the information available in each variable to estimate their deprivation depth (positional poverty gap), even when they are ordinal. Before estimating the index, a previous step is to build indicators that establish a clear and complete ranking for each element based on data availability. Based on the variable type, one can implement different procedures to build indicators, as outlined below.\footnote{Nominal variables cannot be ordered and thus cannot serve as deprivation indicators, though they remain useful for subgroup analysis, conditional expectations, and other multi-variable models.}

\textit{Ordinal variables.} Suppose that, in a dataset, a raw variable $r \in \{1, \ldots, R\}$ represents an attribute that has $R$ categories. To build an indicator based on these elements, they must be classified into ordered achievements such that each ascending value represents a strictly higher achievement ($<$) than the previous value, and in a way that one can define a deprivation cutoff $z_j$ and non-trivially split the set into at least two parts. Let the ordered set of achievements for indicator $j$ be $\mathcal{X}_j = \{x_{j1}, \ldots, x_{jC}\}$, where $C \geq 2$ is the number of categories and, for each $e \in \{1, \ldots, C-1\}$, the relationship $x_{je} < x_{j,e+1}$ holds. Additionally, for an individual $i$, the observed achievement $x_{ij} \in \mathcal{X}_j$ is classified as deprived if $x_{ij} < z_j$ and as non-deprived if $x_{ij} \geq z_j$. When $r$ has only two categories or can only be partitioned into two ordered sets, then $C=2$.
 
However, it is not always possible to rank all elements strictly. Some categories of $r$ may be incomparable, so that neither can be judged more adequate than the other; such categories may share the same achievement value $x_{je}$, forming a semiorder, or be distinguishable only as adequate and inadequate, forming a weak order. An achievement $x_{je}$ may therefore comprise one or more categories of the raw variable. Housing materials illustrate the case: some flooring materials, such as sand and other natural materials, are not clearly differentiable in adequacy, and their ranking may depend on the climatic, geographical, and social context. When the classification of a category is unclear, a good practice is to decide based on the context and objective of the study and to run robustness checks over the alternative classifications.

The positional poverty gap focuses on the ordering within the deprived set, since achievements at or above $z_j$ are censored. Measuring depth therefore requires the deprived achievements, $x_{ij}<z_j$, to be non-trivially divided into at least two ordered subsets. When a single raw variable does not allow this, as with a binary electricity indicator, raw variables representing the same attribute can be combined, for example access to electricity with its weekly frequency. Combining variables that represent related but distinct attributes can also generate multiple ordered categories, but should be avoided when possible: such composites obscure which factor drives the deprivation score and weaken the policy interpretation.

\textit{Cardinal variables.} Raw variables $r$ that are cardinal (interval scale or ratio scale) are originally in a ranked order. Therefore, one need only to define the deprivation cutoff $z_j$ and, when necessary, change the direction of the variable from most deprived to non-deprived.

\subsection{Admissible transformations}
\label{section: 3.2}
After building indicators, the next step is to ensure that, when estimating the measures, only admissible transformations are applied, and admissible statistical and mathematical operations in the indicators considering their type (i.e., ordinal, and cardinal). In this way, the measures will maintain meaningfulness. 

Following \citet{AlkireEtAl2015}, I rely on \citet{Stevens1946}'s classification of measurement scales: nominal, ordinal, interval, and ratio, the first two qualitative and the last two cardinal. Table \ref{table:A1} in Appendix~\ref{appendix: A} reports the admissible transformations and operations for each scale, with examples relevant to poverty measurement.

As Table \ref{table:A1} shows, for ordinal variables no mathematical operation is admissible: subtraction, for instance, would treat the intervals between categories as meaningful when they are not. Types of sanitation facilities are ranked by adequacy, but the ranking carries no information on the magnitude of the improvement from one category to the next.

The poverty gap is an example of a measure that applies mathematical operations that are not admissible for ordinal indicators (i.e., subtraction and division). The normalized poverty gap therefore rests on a stronger condition than cardinality alone: it is meaningful for ratio-scale indicators, since normalizing the shortfall by the cutoff presumes a natural zero \citep{AlkireFoster2011}. An interval-scale indicator, whose zero is set by convention, admits the shortfall $z_j - x_{ij}$ but not a meaningful normalization of it. Nutrition $z$-scores, classified as interval-scale in Table~\ref{table:A1}, are a common example: obtaining a normalized gap for them requires assuming a reference floor to play the role of the natural zero. In contrast, the intensity as estimated in the AF method uses an admissible transformation: it dichotomizes the indicators into deprived and non-deprived, which is a monotonic transformation.

  The proposed positional poverty gap measure extends the AF counting framework while relying only on admissible transformations and statistics.  To meaningfully estimate the
  depth of deprivations, the positional poverty gap measure relies on the frequency distribution, which is a statistic permissible for ordinal and cardinal variable types (see Table \ref{table:A1})\footnote{Sections \ref{section: intensity} and \ref{section: distributional_poverty_gap} further elaborate on the construction of the measures.}. More specifically, it
  applies a cumulative distribution function (CDF), which converts all variables into the same unit: the share of individuals with higher achievements than those who have achieved $x_{ij}$. In this way, the resulting quantities possess ratio-scale properties, with a meaningful zero and interpretable differences and ratios.

These transformations must also yield comparability across indicators and individuals. Across indicators, the CDF transformation expresses every indicator in the same unit and range, the share of individuals with higher achievements, while preserving each variable's original ordering. Indicator or dimension weights then reflect their relative importance. Across individuals, the measure rests on the standard assumption that identical deprivation levels represent an equivalent state of poverty, as in monetary poverty measurement \citep{AlkireEtAl2015}.

\subsection{Intertemporal and spatial comparability}
\label{section: comparability_time_space}

Applying the measure across populations or time periods raises additional comparability requirements beyond those within a single country/region and cross-section. For AF's incidence and intensity, comparability across time holds when deprivation thresholds remain constant, since both are absolute measures tied to fixed thresholds. This rests on an implicit assumption: that the threshold continues to represent a meaningful deprivation cutoff across time and populations.

Like the AF poverty gap, the positional poverty gap requires harmonized thresholds and comparable values below the deprivation cutoff. The difference is that while the AF gap requires these values to be cardinal, the positional gap requires only that they be ordinally ranked. In addition, it also requires a common reference distribution $F^*_j$ because individual positional depth scores depend directly on the distribution against which they are computed.

For intertemporal analysis, two options allow the depth component to be compared across periods. The first is to fix $F^*_j$ to a reference period and hold it constant in subsequent years. The second is to pool all years into a single distribution, so that the reference reflects the entire observed trajectory.
  Table~\ref{tab:reference-options} presents the tradeoffs and normative interpretations of each option.
  
   \begin{table}[H]
  	\centering
  	\small
  	\caption{Reference distribution options for the positional poverty gap}
  	\label{tab:reference-options}
  	\setlength\tabcolsep{3pt}
  	\begin{tabular}{p{2.9cm} p{2.1cm} p{2.1cm} p{2.75cm} p{2.75cm}}
  		\hline
  		Option & Temporal comparability & Spatial comparability &
  		Normative interpretation & Trade-off \\
  		\hline
  		In-sample ($F_j$) \newline \textit{\small ranks individuals within their own sample's distribution} &
  		No &
  		No &
  		Poverty assessed relative to the current distribution (AROP-like) &
  		No design choice required; scores not comparable
  		across samples or periods \\
  		
  		Fixed ($F^*_j$) \newline \textit{\small anchors scores to a pre-specified external distribution} &
  		Yes, when anchored to a reference period &
  		Yes, when anchored to a reference population &
  		Poverty anchored to a chosen normative standard &
  		Stable and comparable within the defined scope;
  		requires an explicit normative choice of baseline \\
  		
  		Pooled ($F^*_j$) \newline \textit{\small ranks individuals relative to the combined multi-sample distribution} &
  		Yes, when the pool spans multiple periods &
  		Yes, when the pool spans multiple populations &
  		Poverty assessed relative to the pooled distribution &
  		Internally comparable across the pool; sensitive
  		to pool composition \\
  		\hline
  	\end{tabular}
  \end{table}

Spatial comparability raises analogous considerations. For incidence and intensity, harmonized thresholds and categories across geographic units are necessary, which is a normative
  choice that assumes that the same threshold represents the same deprivation level across all populations. For the positional poverty gap, it is necessary to also have harmonized ordinal values below the threshold (i.e., deprivation values\footnote{For many common poverty dimensions, internationally established ordinal classifications already exist. For instance, \citet{who2006growth} anthropometric standards define severity bands for malnutrition; the \citet{who_unicef_jmp2017}'s JMP service ladders rank water and sanitation access across five levels; \citet{uis2012isced2011} define education levels through ISCED classifications; and \citet{bhatia_angelou2015beyondconnections} and \citet{esmap2020cooking} classify electricity access and cooking fuel using the multi-tier framework.}), together with a common reference distribution. The same two options apply: one may fix $F^*_j$ to a reference population or pool all geographic units into a shared distribution.

  When both temporal and spatial comparability are required simultaneously, the user should define both a reference period and a reference population, deciding for each whether to fix or pool. When computing a class of measures, one natural approach is a hybrid implementation: the depth component is computed with an anchored $F^*_j$, while incidence and intensity are computed separately for each period and population. This preserves the spatial and inter-temporal comparability of the positional poverty gap while allowing incidence and intensity to reflect
  contemporaneous absolute deprivations.

  A different route is to calculate $F_j$ separately for each period and population (in-sample $F_j$). This choice is normatively consistent with widely used relative measures in
  official statistics. The Eurostat at-risk-of-poverty rate (AROP), for instance, sets the poverty threshold at 60\% of each country's median income in each year,
  capturing the distributional context of each setting without imposing cross-country or cross-time comparability in absolute terms. The in-sample $F_j$ follows the
  same normative logic and can be justified on those grounds.

  Section~\ref{section: properties} discusses the consequences of each choice for the axiomatic properties of the index.

\section{Poverty identification}
\label{section:pov_indent}

The identification of poverty is not trivial, especially in a multidimensional context. The approach to identifying deprivations and poverty is not only a statistical matter, but it is a choice that has potential policy implications. In this section, I explain the main poverty identification methods and how the proposed method incorporates them.

Before identifying the poor, the method records, for each individual $i$, the deprivation status $g_{ij}^0$ of each indicator $j$. As Subsection \ref{section: 3.1} shows, for each $j$ a cutoff $z_j$ is defined, such that $g_{ij}^0=1$ if $x_{ij}<z_j$ and $g_{ij}^0=0$ otherwise. The next step is calculating the individual deprivation score $c_i$, which is the sum of weighted deprivations:
\begin{equation}
c_i = \sum_{j=1}^{d} w_j g_{ij}^0.
\end{equation}
where the weights $w_j$ are positive and normalized to sum to one, $\sum_{j=1}^d w_j=1$. 

Given a poverty cutoff $k$, the identification function $\rho_i(k)$ assigns poverty status according to the individual deprivation score:
\begin{equation}
\rho_i(k) =
\begin{cases}
1 & \text{if } c_i \geq k, \\
0 & \text{if } c_i < k.
\end{cases}
\end{equation}

When the deprivation status is censored to only consider poor individuals, $g^0_{ij}(k)=\rho_i(k)g^0_{ij}$.

If $k=\min_j w_j$, $\rho_i(k)$ corresponds to the \textit{union} approach, classifying people as poor if they are deprived in at least one indicator with positive weight. If $k=1$, $\rho_i(k)$ corresponds to the \textit{intersection} approach, classifying as poor only those deprived in all indicators\footnote{\citet{Atkinson2003} first explained the distinction between union and intersection criteria and how they relate to social-welfare approaches to multidimensional deprivation.}. Finally, for intermediate values $\min_j w_j<k<1$, $\rho_i(k)$ corresponds to the \textit{dual-cutoff} approach proposed by \citet{AlkireFoster2011}, which allows flexible poverty lines between the extremes\footnote{ Beyond these count-based rules, alternative identification criteria have been proposed; \citet{Decerf2023}, for instance, develops a preference-based approach that distinguishes individuals with an extreme deprivation in a single dimension from those with moderate deprivations across several. }.

Considering the dual-cutoff approach, \citet{AlkireFoster2011,AlkireFoster2011b} classify the union and intersection identification as extreme cases, and advocate for intermediate poverty lines according to the objectives and preferences of the user, which is a way of embodying normative judgments. They recognize that poverty line choices are potentially arbitrary and, to minimize this problem, recommend robustness analysis to test for different choices. Therefore, if users want to align the method with the dual cutoff approach, they may set an intermediate poverty line such as $\min_j w_j<k<1$.

In the fuzzy literature, poverty is treated as a vague indicator, recognizing that there is no exact poverty line that can clearly classify people as poor or non-poor. To reduce arbitrariness, the approach usually dispenses with the definition of deprivation cutoffs and poverty lines by assigning poverty degrees to every unit in the sample, which is compatible with the union approach. Therefore, if users want to align the method with the fuzzy approach, they can adopt the union identification rule ($k=\min_j w_j$) and set the deprivation cutoff of each indicator at its top category, $z_j = x_{jC}$, so that only individuals in the top category are non-deprived. In practice, this specification implies that no effective deprivation and poverty cutoffs are imposed\footnote{However, under this specification the cutoffs no longer have any real effect. All but the top category are deprived, so nearly everyone is identified as poor, intensity loses its interpretation as deprivation breadth relative to a poverty standard, and the focus axioms hold only trivially, since almost no one is non-deprived (see Section~\ref{section: properties}).}.

Therefore, one may view the proposed approach here as a more general framework that encompasses both the counting and fuzzy approaches. Even under the dual-cutoff identification, it can treat poverty as a matter of degree rather than a strictly binary condition. Once the method identifies an individual as poor, the analysis can also assess the extent of that person’s deprivation. In this sense, the poverty line would separate the non-poor from the poor, while allowing those identified as poor to experience poverty to different degrees.

\section{Intensity}
\label{section: intensity}
Following \citet{AlkireFoster2011}, the measure of intensity represents the breadth of multidimensional poverty based on the number of weighted deprivations, $c_i$.  The individual intensity is defined as
\begin{equation}
A_i = \sum_{j=1}^d w_j\,g^0_{ij}(k) = c_i(k),
\end{equation}
where $c_i(k) = \rho_i(k)\,c_i$ denotes the censored deprivation score, equal to zero for the non-poor.
Since the weights $w_j$ are normalized to sum to one, $A_i$ ranges between zero and one, with greater values indicating stronger deprivation breadth. 

Finally, the aggregated intensity is measured by the average share of weighted deprivations among the poor:
\begin{equation}
A =\frac{1}{q}\sum_{i=1}^{n}\sum_{j=1}^d w_j g^0_{ij}(k) = \frac{1}{q} \sum_{i=1}^n A_i,
\end{equation}
where $q$ is the number of poor individuals.\footnote{Throughout, $q > 0$ is assumed when computing $A$ and $S$; by convention both equal zero when $q = 0$.} In this way, intensity captures the breadth of multidimensional poverty by identifying individuals with many deprivations occurring simultaneously.

\section{Positional Poverty Gap}

\label{section: distributional_poverty_gap}

The positional poverty gap measures the depth of multidimensional poverty. The proposed approach builds on the cumulative distribution function (CDF), drawing on the intuition of fuzzy poverty measures \citep{CheliLemmi1995, BettiEtAl2006}, but reinterprets it here as a positional poverty gap measure. This section formalizes the measure and clarifies its interpretation as a depth index.

Consider an individual $i$ deprived in indicator $j$. The depth of this deprivation is measured by the individual's position in the distribution of that indicator, taking the worst observed achievement as the benchmark of maximal depth. The \emph{positional depth score} is defined as
\begin{equation}\label{eq:sij}
s_{ij} = \frac{1-F_j(x_{ij})}{1-F_j(\min_i x_{ij})},
\end{equation}
  where $F_j(x_{ij}) \in [0,1]$ is the individual's positional rank in the distribution of indicator $j$, $1$ denotes the top position in that distribution, and
  $\min_i x_{ij}$ denotes the lowest achievement observed for $j$ in the reference distribution\footnote{If all individuals share the minimum value of a deprived indicator, so that $F_j(\min_i x_{ij}) = 1$, the score is set to $s_{ij}=1$ for all deprived individuals, consistent with the binary case in which all deprived individuals are equally the most deprived.}. The score gives the share of individuals better off than $i$, the distance from the top of the distribution, relative to that same distance for the most deprived. It is thus a normalized positional depth, equal to one for the most deprived individual. 

This positional notion of depth adapts the AF poverty gap to ordinal data. In the AF framework, depth is the normalized shortfall from the threshold, $g^{AF}_{ij} = (z_j - x_{ij})/z_j$, so the deprivation cutoff serves as the reference point. The positional depth score instead takes the top of the distribution as reference, a point that remains identifiable from rankings alone when shortfalls from the cutoff carry no cardinal meaning. By construction, the most deprived individual receives $s_{ij} = 1$ and non-deprived individuals are censored to zero. For binary indicators, all deprived individuals share the lowest observed achievement and therefore receive $s_{ij} = 1$.

The normalizer $1-F_j(\min_i x_{ij})$ is not an arbitrary choice. If the score is proportional to the share of individuals better off than $i$, $1-F_j(x_{ij})$, and the most deprived individual is required to receive a score of one, then the proportionality factor must equal $1/\big(1-F_j(\min_i x_{ij})\big)$. The natural alternatives fail one of these requirements: the raw share $1-F_j(x_{ij})$ leaves the scale without a fixed upper endpoint, since the most deprived individual receives $1-F_j(\min_i x_{ij})<1$, a value that drifts across indicators and samples; a threshold-based score, $\big[h_j-F_j(x_{ij})\big]/\big[h_j-F_j(\min_i x_{ij})\big]$, where $h_j$ denotes the population share below the cutoff $z_j$, is undefined for binary indicators, where the single deprived category gives $F_j(\min_i x_{ij}) = h_j$, so that both numerator and denominator equal zero. The normalizer also has a substantive reading: it takes the most deprived individual in the reference distribution as the benchmark for depth. This benchmark is identifiable from ordinal rankings alone, and it is the point of maximal depth, so fixing the scale there is what guarantees a maximal score of one for every indicator, including binary ones. 

This reading is objective: the score records how exceptional the deprived state is in the reference population. Where deprivation is rare, a high score identifies individuals left furthest behind a standard that most of the population reaches. Subsection~\ref{section: rd_foundation} derives the remaining element of this construction: the proportionality to the share of better-off individuals is itself the unique form consistent with an additive structure of interpersonal comparisons and with ordinal invariance, so the score is characterized rather than posited.

After calculating the scores, I then censor them to ensure that they only apply to deprived indicators. I define the censored positional depth score of each deprived indicator as $g^{1}_{ij}=g^{0}_{ij}\,s_{ij}$, replacing each item in the gap matrix with the indicator's positional depth score and replacing with zero when the indicator $x_{ij}\geq z_j$. I then apply the poverty identification function such that $g^{1}_{ij}(k)=g^{1}_{ij}\,\rho_i(k)$, further replacing elements with zero when an individual is not poor.

The positional depth scores are invariant to changes in the deprivation cutoff. Because the scores $s_{ij}$ are computed from the distribution of $x_{ij}$ before applying censoring, they do not depend on the choice of $z_j$. When the cutoff changes, only identification changes: individuals who move above the threshold are censored to zero, while the scores of those who remain deprived stay the same. The cutoff thus governs identification, whereas the positional depth score reflects only the individual’s position in the distribution, independently of where the threshold is set.

At the individual level, positional poverty gap is defined as the weighted average of censored positional depth scores, normalized by the individual’s censored deprivation score. Formally,
\begin{equation}
S_i = 
\frac{\sum_{j=1}^d w_j \, g^{1}_{ij}(k)}{\sum_{j=1}^d w_j g_{ij}^0(k)}.
\end{equation}
By convention, $S_i = 0$ whenever $\sum_j w_j g^0_{ij}(k) = 0$, that is, for all non-poor individuals.

 The aggregate positional poverty gap used in the index is then:
\begin{equation}
S \;=\; 
\frac{\displaystyle \sum_{i= 1}^n\sum_{j=1}^d w_j\, g^{1}_{ij}(k)}
     {\displaystyle \sum_{i= 1}^n\sum_{j=1}^d w_j\, g^0_{ij}(k)} .
\end{equation}
In words, $S$ is the average depth of a \emph{typical deprivation episode} among the poor, or, equivalently, a deprivation-weighted average of individual positional poverty gap values.

The reference distribution used to compute $F_j$ is an analytical choice open to the researcher, and each option embeds a different normative assumption (see Subsection~\ref{section: comparability_time_space} and Table~\ref{tab:reference-options}). This flexibility mirrors the role of the poverty cutoff $k$ in the AF framework: it is a normative decision, not a technical default, and should be stated, justified, and subjected to robustness analysis alongside the choice of dimensions, weights, and thresholds. Just as counting-based poverty orderings can be sensitive to these standard choices \citep{AzpitarteEtAl2020}, the reference distribution is a further factor along which rankings should be probed.

\section{Overall multidimensional poverty index}
\label{section: overall_poverty}

This section integrates the two preceding components, intensity $A_i$ and the individual positional poverty gap $S_i$,  into a single poverty degree at the individual level. Aggregating these degrees across the population yields the overall index $P$, which captures incidence, breadth, and depth of multidimensional poverty in a single summary measure.

\subsection{Individual poverty degrees}
As mentioned previously, multidimensional poverty can be interpreted in terms of degrees. The measure then reflects intermediate situations at the individual level, rather than an “all-or-nothing" poverty condition. Assessing the extent of an individual's deprivation, prior to aggregating across the population, is itself a distinct measurement step that has received comparatively little attention \citep{Permanyer2014}.

To represent intermediate situations, I define poverty degrees as the multiplication of the individual intensity and positional poverty gap measures. In formal terms, the poverty degree of individual $i$ is given by
\begin{equation}
P_i = A_i \cdot S_i .
\end{equation}

Substituting the definitions of $A_i$ and $S_i$, one obtains
\begin{equation}
P_i = \left(\sum_{j=1}^d w_j g_{ij}^0(k)\right) \cdot \left(\frac{\sum_{j=1}^d w_j g_{ij}^1(k)}{\sum_{j=1}^d w_j g_{ij}^0(k)}\right)= \sum_{j=1}^d w_j g_{ij}^1(k) .
\end{equation}

Thus, the poverty degree of a multidimensionally poor individual is the weighted sum of the censored positional depth scores, with weights summing to one across all indicators; equivalently, it is the individual's intensity multiplied by the average depth of the indicators in which the individual is deprived. Higher values indicate greater poverty levels, with zero indicating non-poverty, one indicating total poverty, and values between zero and one indicating intermediate poverty situations. Therefore, the measure follows the fuzzy logic, and $P_i$ can be interpreted as a membership function.

\subsection{Aggregated multidimensional poverty}

The incidence of poverty, or poverty headcount ratio, is widely used in empirical research and official national statistics. It is simple, easy to understand, and familiar to policymakers and the public. Therefore, it is important to include the incidence as a component of the index while adjusting for its limitations.

The incidence of multidimensional poverty is the percentage of people who are identified as poor given a poverty line, $k$. It can be represented as follows:

\begin{equation} \label{eq:5}
H = \frac{1}{n}\sum_{i=1}^n \rho_i(k) = \frac{q}{n}, 
\end{equation}
where $\rho_i(k)$ is the poverty identification function, $n$ total population, and $q$ number of poor. $H$ therefore measures the proportion of people who are identified as multidimensionally poor.

As one can see in Equation \ref{eq:5}, $H$ is obtained only as an aggregated measure, as it is not possible to have individual levels. That is why it may be preferable to use $H$ as a separate measure from the poverty degrees, which is defined as an individual measure, and adjust it to better represent the different aspects of poverty\footnote{For simplicity, users can also employ $H$ as a standalone poverty indicator and interpret it in conjunction with the poverty degree, revealing that a share $H$ of the population is experiencing poverty but at varying degrees.}. 

I then define the aggregated multidimensional poverty, or adjusted positional poverty gap, as the product of the three censored averages:
\begin{equation}
P = H \cdot A \cdot S\;=\; \frac{1}{n}\sum_{i=1}^{n} P_i ,
\end{equation}
where $A$ is the average intensity among the poor and $S$ is the  average positional poverty gap among the poor, as defined in Sections \ref{section: intensity} and \ref{section: distributional_poverty_gap}. Hence the adjusted positional poverty gap index $P$ equals the population mean of individual poverty degrees, where the weighting by deprivation episodes is what makes the product identity exact. 

The positional depth scores $s_{ij}$ can be raised to a power $\alpha \geq 1$, with $\alpha > 1$ placing greater weight on the most deeply deprived. Writing the generalized censored score as $g^\alpha_{ij}(k) = g^0_{ij}(k)\, s_{ij}^\alpha$, this defines a generalized index $P_\alpha = \frac{1}{n}\sum_i\sum_j w_j\, g^\alpha_{ij}(k)$, which is analogous to the AF generalized family
$M_\alpha = \frac{1}{n}\allowbreak\sum_i \sum_j w_j\, \big(g^{AF}_{ij}(k)\big)^{\alpha}$, obtained by raising the AF gap to the same power, where $g^{AF}_{ij}(k) = \rho_i(k)\, g^{0}_{ij}\, g^{AF}_{ij}$ is the censored AF gap. When $\alpha = 1$, $P$ is the positional poverty gap index,
analogue of the AF adjusted poverty gap $M_1 = H \cdot A \cdot G$, 
where $G$ is the normalized poverty gap \citep[see][]{AlkireEtAl2015}, the average AF poverty gap per deprivation episode among the poor, $G = \dfrac{\sum_{i=1}^{n}\sum_{j=1}^d w_j\, g^{AF}_{ij}(k)}{\sum_{i=1}^{n}\sum_{j=1}^d w_j g_{ij}^0(k)}$, defined analogously to $S$. Its individual counterpart, $G_i = \dfrac{\sum_{j=1}^d w_j\, g^{AF}_{ij}(k)}{\sum_{j=1}^d w_j g_{ij}^0(k)}$, is the cardinal counterpart of the individual positional poverty gap $S_i$.
Note that when all indicators are binary, $P$ recovers the AF's $M_0$.

\section{Characterization and properties}
 \label{section: properties}

\subsection{Characterization of the positional depth score}
\label{section: rd_foundation}

Section~\ref{section: distributional_poverty_gap} introduced the positional depth score and showed that, once the score is proportional to the share of individuals better off and the most deprived individual is required to score one, the normalizer is uniquely determined. This subsection derives the proportionality itself, building on the relative-deprivation literature introduced in Section~\ref{section: background}. I show that replacing the cardinal scale conditions of \citet{EbertMoyes2000} with the ordinal invariance requirement of Section~\ref{section: 3.2} turns Yitzhaki's index \citep{Yitzhaki1979} into the positional depth score. The score is therefore not one selection among admissible rank transformations: it is the unique normalized deprivation score consistent with this comparison structure and with the informational content of ordinal data.

The characterization rests on one normative premise, applied only after identification: an individual is deprived when $x_{ij} < z_j$. For ordinal indicators the absolute distance to the threshold is not defined, so depth must be assessed relationally; the premise adopted is that, among the deprived, a position below a larger share of the reference population is the worse condition, since more of that population attains what the individual does not. This judgment concerns positional disadvantage only. It does not imply that a rare deprivation is intrinsically more harmful than a widespread one. The conditions below give the judgment its formal content.

Fix an indicator $j$ and let $\mathcal{R}$ denote the reference population, the finite set of individuals whose achievements generate the reference distribution $F_j$: the current sample under in-sample CDFs, a baseline or pooled sample under anchored CDFs (Section~\ref{section: comparability_time_space}). Following \citet{EbertMoyes2000}, the primitive is a nonnegative deprivation function $\delta_{ij}(X_j,B)$, defined for any achievement profile, the deprivation of individual $i$ in indicator $j$ with respect to a group $B\subseteq \mathcal{R}$, where $X_j$, the $j$-th column of the achievement matrix, collects the achievements in indicator $j$. Deprivation is understood here as a condition rather than a perception. The basic element is that another individual ranks above $i$, and who ranks above whom is exactly what an ordinal variable records. The positional depth score of a deprived individual is the deprivation with respect to the full reference population, rescaled so that the most deprived member of $\mathcal{R}$ receives a score of one, which is the normalization of Section~\ref{section: distributional_poverty_gap}. I impose four conditions on the deprivation function.

\emph{Pairwise comparison.} For any single individual $h$, the deprivation of $i$ with respect to $h$ depends only on the two achievements involved, through a function common to all individuals and comparators: $\delta_{ij}(X_j,\{h\}) = f(x_{ij},x_{hj})$. Each pairwise comparison is self-contained, and neither the identity of $i$ nor that of the comparator matters. This condition merges the independence and anonymity conditions of \citet{EbertMoyes2000}.

\emph{Focus.} For any $h\in B$ with $x_{hj}\le x_{ij}$, $\delta_{ij}(X_j,B)=\delta_{ij}(X_j,B\setminus\{h\})$. Deprivation is generated only by individuals with strictly higher achievements; those at the same position or below contribute nothing.

\emph{Additive decomposition.} For disjoint $B_1,B_2\subseteq \mathcal{R}$, $\delta_{ij}(X_j,B_1\cup B_2)=\delta_{ij}(X_j,B_1)+\delta_{ij}(X_j,B_2)$. Comparisons aggregate without complementarities: the deprivation added by one further better-off person does not depend on how many others already rank above $i$.

\emph{Ordinal invariance.} For every strictly increasing transformation $\varphi$ of the real line, $\delta_{ij}(\varphi(X_j),B)=\delta_{ij}(X_j,B)$, where $\varphi(X_j)$ applies $\varphi$ to every achievement; achievements are coded as real numbers, of which only the ordering is meaningful. This is the admissible-transformation requirement of Section~\ref{section: 3.2}, applied to the deprivation function itself.

\begin{theorem}\label{thm:rd_foundation}
Fix an indicator $j$ with reference distribution $F_j$ such that $F_j(\min_i x_{ij})<1$. If the deprivation function is not identically zero and satisfies pairwise comparison, focus, additive decomposition, and ordinal invariance, then the deprivation of individual $i$ with respect to the reference population is proportional to the share of that population with achievements strictly above $x_{ij}$, that is, to $1-F_j(x_{ij})$. Consequently, under the normalization of Section~\ref{section: distributional_poverty_gap}, the score of any achievement not below the reference minimum is uniquely
\[
s_{ij}=\frac{1-F_j(x_{ij})}{1-F_j(\min_i x_{ij})}.
\]
\end{theorem}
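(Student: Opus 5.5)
The argument has three steps: reduce deprivation to a sum of pairwise terms, show that ordinal invariance forces each pairwise term to be a constant indicator, and then normalize.

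\emph{Step 1 (additive reduction).} Since $\mathcal{R}$ is finite, write $\mathcal{R}$ as the disjoint union of singletons $\{h\}$, $h\in\mathcal{R}$. Applying additive decomposition repeatedly gives $\delta_{ij}(X_j,\mathcal{R})=\sum_{h\in\mathcal{R}}\delta_{ij}(X_j,\{h\})$. Pairwise comparison then turns this into $\sum_{h}f(x_{ij},x_{hj})$.

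Additive decomposition with $B_1=B_2=\emptyset$ gives $\delta_{ij}(X_j,\emptyset)=0$. Focus applied to $B=\{h\}$ with $x_{hj}\le x_{ij}$ then gives $f(a,b)=0$ whenever $b\le a$. Hence only comparators strictly above $i$ contribute, and
\[
\delta_{ij}(X_j,\mathcal{R})=\sum_{h:\,x_{hj}>x_{ij}}f(x_{ij},x_{hj}).
\]

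\emph{Step 2 (ordinal invariance pins down $f$).} This is the main step. Applied to $B=\{h\}$, ordinal invariance gives $f(\varphi(a),\varphi(b))=f(a,b)$ for every strictly increasing $\varphi$ of the real line.

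Take any two pairs $a<b$ and $a'<b'$. The piecewise-linear map sending $a\mapsto a'$ and $b\mapsto b'$, extended with slope one outside $[a,b]$, is strictly increasing. So $f(a,b)=f(a',b')$, and $f$ equals a constant $\lambda\ge 0$ on $\{a<b\}$.

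For this to be usable, the deprivation function must be defined for every achievement profile, so that $\varphi(X_j)$ is an admissible argument. The paper states that $\delta$ is defined for any achievement profile, so this holds.

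If $\lambda=0$, then $f\equiv0$, and Step 1 makes $\delta$ identically zero on every profile and every group. This contradicts the nontriviality hypothesis, so $\lambda>0$. Therefore
\[
\delta_{ij}(X_j,\mathcal{R})=\lambda\,\#\{h\in\mathcal{R}:x_{hj}>x_{ij}\}=\lambda\,|\mathcal{R}|\,\bigl(1-F_j(x_{ij})\bigr),
\]
which is the claimed proportionality.

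\emph{Step 3 (normalization).} Multiply by a factor $c$ chosen so that the most deprived member of $\mathcal{R}$ scores one:
\[
c\,\lambda|\mathcal{R}|\bigl(1-F_j(\min_i x_{ij})\bigr)=1.
\]
This is solvable and unique because $F_j(\min_i x_{ij})<1$ by hypothesis. It yields $s_{ij}=\bigl(1-F_j(x_{ij})\bigr)/\bigl(1-F_j(\min_i x_{ij})\bigr)$, and the unknown constant $\lambda$ cancels.

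This reproduces the argument of Section~\ref{section: distributional_poverty_gap}. The restriction to achievements not below the reference minimum keeps $s_{ij}\in[0,1]$. That restriction matters under anchored CDFs, where an evaluated $x_{ij}$ need not lie in $\mathcal{R}$; the formula for $\delta$ still holds in that case, because Step 1 counts only members of $\mathcal{R}$.
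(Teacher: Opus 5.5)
Your proof is correct and follows essentially the same route as the paper's. It has the same steps: $\delta_{ij}(X_j,\emptyset)=0$ from additive decomposition, focus plus additivity reducing deprivation to a sum over strictly better-off comparators, a piecewise-linear strictly increasing bijection forcing $f\equiv\lambda$ on ordered pairs with $\lambda>0$ by nontriviality, and a normalization in which $\lambda$ and $|\mathcal{R}|$ cancel. The paper also checks the converse, that $\lambda\,|\{h\in B: x_{hj}>x_{ij}\}|$ satisfies all four conditions; this shows the hypotheses can be met but is not needed for the implication as stated.
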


\textbf{Proof.} The proof is in Appendix~\ref{appendix: B}, together with a remark showing that the conditions are independent.
\bigskip

The theorem also locates the exact difference from Yitzhaki's index, which the same comparison structure delivers under cardinal scale conditions \citep{EbertMoyes2000}. That index factors into the share of individuals ranking above $i$ and the average size of the corresponding shortfalls; the second component is not preserved under ordinal invariance, so the positional depth score is not the same quantity, but what remains of it when only the event of being outranked, and not its size, is meaningful. Under the four conditions, the deprivation value can depend only on how much of the reference population ranks above the individual, so ordinal depth is measured in population shares rather than in units of the underlying variable. The substantive condition is additive decomposition, which excludes power transformations of the score, $s_{ij}^{\gamma}$ with $\gamma\neq 1$. Rejecting additivity is a coherent normative position, and the generalized family $P_\alpha$ of Section~\ref{section: overall_poverty} shows where such a preference belongs: the theorem concerns the measurement of individual deprivation, while the exponent $\alpha$ expresses the evaluator's aversion to deep deprivation at the aggregation stage, as the FGT exponent operates on a linear individual gap.

The theorem holds for any choice of reference distribution, in-sample or anchored. Its scope conditions are those already stated: when $F_j(\min_i x_{ij})=1$ the convention $s_{ij}=1$ applies as the continuous limit of the formula, and under anchored CDFs achievements below the reference support are scored at one as defined in the next subsection. For cardinal indicators, imposing ordinal invariance is a choice to use only the ordinal content of the variable; it is this choice that gives the scores a common unit across indicators, while magnitude-based measures such as the AF gap remain admissible where their scale requirements are met.

\subsection{Axiomatic properties}
Axiomatic approaches to multidimensional poverty specify properties designed to guarantee well-behaved poverty indices. Most properties are temporal, as they specify how the index should respond when recomputed for a new time point in a panel. In this section, I discuss and prove whether standard axioms used in the multidimensional poverty literature \citep{AlkireFoster2011, AlkireEtAl2015} apply to the proposed method. Here, I focus on the axioms for which the justification differs from that in the AF framework due to the positional poverty gap component. These axioms are: deprivation and poverty focus, monotonicity, subgroup consistency and decomposability, and weak transfer. The remaining properties (symmetry, replication invariance, and weak rearrangement) are briefly discussed, as their proofs are direct and follow the same argument as for the AF measures. 
Moreover, to analyze the innovative part of the index, I focus on $P$ (or, more generally, $P_\alpha$ with $\alpha \geq 1$), distinguishing between individual or aggregated levels when necessary. This focus is due to the individual and aggregated intensity here being the same as in the AF framework, and their properties are discussed elsewhere (see \citet{AlkireEtAl2015}). 

The distinction between a fixed and a sample-dependent reference distribution is what drives the axiomatic behavior of the index. A related caution appears in the literature on data-driven index components: \citet{DuttaEtAl2021} show that endogenous weighting systems, including frequency-based weights that depend on how often deprivation occurs in each indicator, can violate monotonicity and subgroup consistency, because one individual's change then alters other individuals' contributions to the aggregate. The positional poverty gap does not use endogenous weights, but an in-sample CDF introduces an analogous sample-dependent element into the depth score, with comparable consequences. This subsection shows that anchoring the reference distribution removes this dependence.

I consider two possible implementations: (i) \emph{anchored CDFs}, where each $F_j$ is estimated once either on a baseline sample (then used to score subsequent years) or on a pooled sample containing all years under analysis (one-time estimation for the full set), and is then held fixed for evaluation, with any achievement below the reference support in subsequent samples scored at one, as at the reference minimum; (ii) \emph{in-sample CDFs}, where $F_j$ is computed from the current sample, without a baseline reference, which means that the CDF is recalculated at each time period. Table \ref{table:tab1} summarizes the properties and whether they hold or not, considering the two possible implementations.

\begin{table}[ht]
\centering
\caption{Axioms summary}
\setlength{\tabcolsep}{3pt}
\begin{tabular}{lcc}
\toprule
Property & Anchored CDF & In-sample CDF \\
\midrule
Symmetry, replication, bounds        & \cmark & \cmark \\
Ordinal invariance (admissible transf.) & \cmark & \cmark \\
Deprivation focus                     & \cmark & \cmark \\
Poverty focus                         & \cmark & $\diamond^1$ \\
Own-person monotonicity               & \cmark & \cmark \\
Aggregate monotonicity                & \cmark & \xmark \\
Dimensional monotonicity              & \cmark & \xmark \\
Decomposability \& subgroup consistency & \cmark & \xmark \\
Weak rearrangement                    & \cmark & \cmark  \\
Weak transfer                         & \xmark & \xmark \\
\bottomrule
\end{tabular}
\label{table:tab1}%
{\notesize Notes: \cmark\ satisfied; $\diamond$ conditional; \xmark\ fails. 1. Satisfied only when the poverty identification is by union ($k = \min_j w_j$).\par}
\end{table}

In what follows, I present a formal and an intuitive presentation of the properties. The proofs are in Appendix~\ref{appendix: B}.

\emph{Deprivation Focus.} Let $X'$ be obtained from $X$ by changing a single cell $(i,\theta)$
such that $x'_{i\theta}>x_{i\theta}\ge z_\theta$ (an improvement in a non-deprived indicator), and
$x'_{pj}=x_{pj}$ for all $(p,j)\neq(i,\theta)$. A poverty measure $P$ satisfies
deprivation focus if
\[
P(X';z,k)=P(X;z,k).
\]

This axiom guarantees that non-deprived indicators do not influence the index. $P$ satisfies this axiom. This result follows from censoring the values of the positional depth score for nondeprived indicators. 

\textit{Poverty Focus.} \,
Let $X'$ be obtained from $X$ by changing the achievements of a single person $i$
so that $\rho_i(k)=\rho'_i(k)=0$ (the person remains non-poor), with $x'_{pj}=x_{pj}$ for all
$(p,j)\neq(i,\cdot)$. A poverty measure $P$ satisfies \emph{Poverty Focus} if
\[
P(X';z,k)=P(X;z,k).
\]

This axiom states that changes in the achievements of the \emph{non-poor} do not affect measured poverty.
$P$ satisfies this property when the CDFs are \emph{anchored} and also under \emph{in-sample} CDFs if the identification is by \emph{union} ($k=\min_j w_j$), because non-poor then have no deprived cells. 
The axiom may \emph{fail} with \emph{in-sample} CDFs and intermediate $k$, but only in a specific case: a change by a non-poor person (who has at least one deprivation) to one of their deprived indicators ($x_{i\theta}<z_\theta$) can alter the empirical CDF for indicator $\theta$ and thereby the positional depth scores of other individuals, so that $P(X';z,k)$ may differ from $P(X;z,k)$.

\textit{Monotonicity.} \, Let $X'$ be obtained from $X$ by worsening person $i$ in indicator $\theta$: either (a) $x'_{i\theta}<x_{i\theta}<z_\theta$ (still deprived), or
(b) $x'_{i\theta}<z_\theta\le x_{i\theta}$ (crosses into deprivation). Monotonicity requires $P(X';z,k)\ge P(X;z,k)$, with strict inequality whenever the person is (or becomes) poor, the worsening is strict in a deprived indicator, the score $s_{i\theta}$ has not already attained its maximum of one, and the reference distribution places positive mass in the interval $(x'_{i\theta}, x_{i\theta}]$, where under in-sample CDFs this mass must come from individuals other than $i$ (in particular, whenever the person moves past at least one category occupied by others).

For own-person monotonicity, $P_i$ does not decrease when a deprivation worsens. The positional depth score depends only on the individual's position in the reference distribution: if $x_{ij}$ falls, the share of individuals with higher achievements can only remain constant or rise, so $s_{ij}$, and hence $P_i$, cannot decrease, and the increase is strict under the positive-mass condition stated above.

For aggregated monotonicity, violation or not depends on whether one uses \textit{anchored CDFs} or \textit{in-sample CDFs}. For the latter, monotonicity need not hold. Because $F_j$ is a relative measure, when it is recomputed from the sample, a change in $x_{i\theta}$ shifts all positional depth scores in column $\theta$. For a worsening within deprivation that leaves the set of the poor unchanged, the aggregate change is:
\[
\Delta P \;=\; \frac{1}{n}\,w_\theta\Big[(s'_{i\theta}-s_{i\theta})\rho_i(k)g^0_{i\theta} \;+\; \sum_{p\neq i}\big(s'_{p\theta}-s_{p\theta}\big)\rho_p(k)g^0_{p\theta}\Big].
\]
The second (externality) term can be negative and dominate the first; hence $\Delta P$ can be $<0$ even though individual $i$ worsens in a deprived indicator. Therefore aggregate monotonicity \emph{need not hold} with in-sample CDFs.%
\footnote{With in-sample CDFs, aggregate monotonicity can fail via two channels, even if the poor set is fixed. 
(i) \emph{Denominator (minimum) effect:} if the share at an unchanged minimum $m_\theta$ changes by one observation, the common normalizer
$D_\theta^{\mathrm{IS}}=1-F_\theta^{\mathrm{IS}}(m_\theta)$ shifts by $\Delta D_\theta^{\mathrm{IS}}=\pm 1/n$ (use the weighted analogue with survey weights); if the worsening establishes a new minimum, $D_\theta^{\mathrm{IS}}$ jumps to one minus the mass at the new minimum, a shift that can be much larger. Either way, all positional depth scores in column $\theta$ are rescaled:
$\Delta s^{\mathrm{IS}}_{p\theta}=-(1-F_\theta^{\mathrm{IS}}(x_{p\theta}))\,\Delta D_\theta^{\mathrm{IS}}/(D_\theta^{\mathrm{IS}}\,D_\theta^{\mathrm{IS}\prime})$, where $D_\theta^{\mathrm{IS}\prime}$ is the new normalizer, even if ranks are unchanged.
(ii) \emph{Peer-redistribution effect:} with $m_\theta$ fixed, moving one observation from $a$ to $b>a$ (both below $z_\theta$) changes the empirical CDF by
$\Delta F_\theta^{\mathrm{IS}}(x)=-1/n$ for peers with $x\in[a,b)$ (and $+1/n$ if $b<a$), shifting their numerators and hence
$\Delta s^{\mathrm{IS}}_{p\theta}=-\Delta F_\theta^{\mathrm{IS}}(x_{p\theta})/D_\theta^{\mathrm{IS}}$. Either effect can outweigh the worsened person’s increase, so the aggregate change can be negative. For a concrete example, consider a single indicator with $z_\theta=5$ and eleven individuals, five with achievement $1$, four with $2$, one with $3$, and one non-deprived with $10$; under union identification $P=6.5/11\approx0.59$. If the individual at $3$ worsens to $0$, the normalizer rises from $6/11$ to $10/11$, every other deprived score deflates, and $P$ falls to $3.9/11\approx0.35$.}

For the \textit{anchored CDFs} monotonicity will hold, as changes in $x_{i\theta}$  will not shift any other cells’ depth terms.

\textit{Subgroup Consistency and Decomposability.} \,
Let the population be partitioned into $L\ge2$ mutually exclusive and collectively exhaustive subgroups
$\ell_1,\ldots,\ell_L$ with sizes $n^\ell$ (so $\sum_{\ell=1}^L n^\ell=n$), and let $X^\ell$ denote the
achievement matrix restricted to subgroup $\ell$. Fix cutoffs $(z,k)$, weights $w$, and evaluate the
total population and all subgroups using the \emph{same} reference CDFs (anchored). Then:

\textit{Subgroup Consistency.}
If $X'$ is obtained from $X$ by changing outcomes \emph{only} in subgroup $\ell'$ so that
$P(X'^{\,\ell'};z,k)<P(X^{\ell'};z,k)$ and $P(X'^{\,\ell};z,k)=P(X^{\ell};z,k)$ for all $\ell\neq\ell'$,
while subgroup sizes remain the same, then
\[
P(X';z,k)\;<\;P(X;z,k),
\]
and the inequality reverses if subgroup $\ell'$ worsens.
\bigskip

\textit{Population Subgroup Decomposability.}
The overall index equals the \emph{population-weighted mean} of subgroup indices, where each subgroup contributes
in proportion to its population share:
\[
P(X;z,k)\;=\;\sum_{\ell=1}^{L}\Big(\frac{n^\ell}{n}\Big)\,P(X^\ell;z,k).
\]

This identity holds whether one computes the subgroups first or the total first, provided all are
evaluated against the same reference CDFs and fixed $(z,k,w)$.

\textit{Weak transfer.} An equalizing swap within one indicator, taking a little achievement from a less-deprived poor person and giving it to a more-deprived poor person, keeping their average fixed, should not raise measured poverty. However, the proposed measures do not satisfy this property, because they use a rank–based penalty $s_{ij} \propto 1-F_j(x_{ij})$. Its curvature depends on the shape of $F_j$: if the density rises over the deprived range, $1-F_j(\cdot)$ is \emph{concave}, so an equalizing (Pigou–Dalton) transfer between two poor in indicator $j$ can \emph{increase} the sum of penalties. Hence, unlike AF gaps (which are convex for $\alpha\!\ge\!1$), weak transfer is not distribution-free for CDF–based depth; with in-sample CDFs it can fail even more often due to denominator (minimum) rescaling and peer-redistribution effects\footnote{For a concrete example under anchored CDFs, consider an indicator with four equally spaced values $1,2,3,4$ carrying population shares $0.1$, $0.1$, $0.4$, and $0.4$, with $z_j=4$. The anchored scores of the deprived values are $s=1$, $0.889$, and $0.444$. A progressive transfer moving one poor person from $3$ to $2$ and another from $1$ to $2$ raises the sum of their scores from $1.444$ to $1.778$, so measured poverty increases. Since this argument applies to the cardinal subdomain, where transfers are well defined, the axiom fails in general.
}. 
The weak transfer in the AF-framework holds only when $\alpha\geq1$, which is rarely used. In practice, therefore, most studies do not comply with this property and do not rely on it. 

The methodology also satisfies the following properties. 
\medskip

\emph{Symmetry.}\, Relabeling individuals or indicators (with weights aligned) leaves $P$ unchanged. 

\medskip

\emph{Replication invariance.}\, Duplicating the population does not change $P$ since it is a (weighted) mean of cell contributions. 

\medskip

\emph{Bounds/normalization.}\, $P\in[0,1]$, equals $0$ when no one is poor, and reaches $1$ only at maximal deprivation depth and breadth. The upper bound follows from the normalized score $s_{ij}\le 1$: under in-sample CDFs the normalizer is evaluated at the sample minimum, and under anchored CDFs achievements below the reference support are scored at one, as defined above.

\medskip

\emph{Ordinal invariance.}\, Monotone recodings of ordinal indicators do not affect $P$ because $s_{ij}$ depends only on ranks via $F_j$. 

\medskip

\emph{Weak rearrangement.}\, Under an association-decreasing rearrangement among the poor, in which two poor individuals exchange achievements in a subset of indicators while both remain poor, $P$ does not increase. In fact $P$ is unchanged: each indicator's multiset of values is preserved by the exchange, so every score $s_{ij}$ attached to a given value is unchanged (under anchored CDFs because the reference is fixed, and under in-sample CDFs because the empirical distributions are unaffected by the swap), and the exchanged censored cells contribute the same total. The measure therefore satisfies weak rearrangement with equality, as does $M_0$.

\medskip

\begin{theorem}\label{thm:theorem} 
Given weights $w$, cutoffs $(z,k)$, and $P_\alpha$ with $\alpha \geq 1$, the methodology with positional depth scores satisfies:
\emph{symmetry}, \emph{replication invariance}, \emph{bounds/normalization}, \emph{ordinal invariance}, and \emph{deprivation focus} (anchored or in-sample); 
\emph{poverty focus} under \emph{anchored} CDFs for any $k$, and under \emph{in-sample} CDFs only under \emph{union} identification; 
\emph{individual-level monotonicity} (anchored or in-sample); 
\emph{aggregated-level} and \emph{dimensional monotonicity}  under \emph{anchored} CDFs;
\emph{subgroup decomposability} and \emph{subgroup consistency} under \emph{anchored} CDFs; and
\emph{weak rearrangement} (anchored or in-sample).%
\end{theorem}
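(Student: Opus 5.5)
The plan is to organize everything around one structural fact: $P_\alpha=\frac{1}{n}\sum_i\sum_j w_j\,\rho_i(k)\,g^0_{ij}\,s_{ij}^\alpha$ is a population mean of cell contributions. Under anchored CDFs, each $s_{ij}=\big(1-F^*_j(x_{ij})\big)/\big(1-F^*_j(m_j)\big)$ depends only on $x_{ij}$ through a fixed, weakly decreasing function $\sigma_j$ with values in $[0,1]$, where $m_j$ is the reference minimum and values below the support are mapped to one. Under in-sample CDFs, $s_{ij}$ depends on $x_{ij}$ and on the multiset of column $j$.

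I would first dispose of the properties that hold under both implementations, each in one or two lines.
\begin{itemize}
\item \emph{Symmetry.} Permuting rows permutes both the cells and the multiset-preserving CDF. Permuting columns together with weights leaves the double sum unchanged.
\item \emph{Replication invariance.} Replicating the population leaves every empirical $F_j$ unchanged and doubles numerator and $n$ alike.
\item \emph{Bounds.} We have $0\le s_{ij}\le 1$, hence $s^\alpha_{ij}\le 1$ and $P_\alpha\le\sum_j w_j=1$. Also $P_\alpha=0$ when $q=0$, and $P_\alpha=1$ forces $\rho_i=g^0_{ij}=s_{ij}=1$ for all cells.
\item \emph{Ordinal invariance.} For strictly increasing $\varphi$, $F_j^{\varphi}(\varphi(x))=F_j(x)$, the minimum maps to the minimum, and $\varphi(x)<\varphi(z_j)$ iff $x<z_j$. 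So $g^0$, $\rho$ and $s$ are unchanged, in the spirit of the ordinal invariance condition of Theorem~\ref{thm:rd_foundation}.
\item \emph{Deprivation focus.} Changing a cell with $x_{i\theta}\ge z_\theta$ to a larger value keeps $g^0_{i\theta}=0$, so $c_i$ and $\rho_i$ are unchanged. Under in-sample CDFs, $F_\theta$ changes only at points $\ge z_\theta$. The deprived scores use $F_\theta(x)$ for $x<z_\theta$ and $F_\theta(m_\theta)$, and these are unaffected unless $m_\theta\ge z_\theta$, in which case no cell is deprived anyway.
\item \emph{Weak rearrangement.} A swap between two poor persons preserves each column multiset, hence each $F_j$. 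Both persons remain poor, so the censored contributions are merely exchanged.
\item \emph{Individual monotonicity.} $1-F_\theta$ is weakly decreasing in $x$. Case (a) raises $s_{i\theta}$ weakly, and case (b) turns a zero cell positive and can only raise $c_i$, hence $\rho_i$. Strictness follows from positive mass on $(x'_{i\theta},x_{i\theta}]$, with the normalizer fixed or, under in-sample CDFs, the mass counted net of $i$.
\end{itemize}

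Next come the anchored-only results, which all follow from cell separability.
\begin{itemize}
\item \emph{Poverty focus.} If $\rho_i=\rho'_i=0$, person $i$'s cells contribute zero before and after, and no other cell moves because $\sigma_j$ is fixed.
\item \emph{Aggregate and dimensional monotonicity.} The change in $P_\alpha$ equals $\frac1n$ times the change in $P_i$, and $P_i$ weakly increases by the individual result. Because only $i$'s row moves, $\rho_p$ is unchanged for $p\neq i$.
\item \emph{In-sample poverty focus under union.} Under union, a non-poor person has no deprived cells. Their changes therefore keep them non-deprived and, by the deprivation-focus argument, leave every deprived score unchanged.
\item \emph{Decomposability.} Splitting the sum over $i$ by subgroup gives $P=\sum_\ell (n^\ell/n)P(X^\ell)$ exactly, since each subgroup is scored by the same $\sigma_j$.
\item \emph{Subgroup consistency.} This is then immediate: only the $\ell'$ term changes, and its weight $n^{\ell'}/n$ is positive and fixed.
\end{itemize}

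The main obstacle is the careful handling of edge conventions rather than any deep step. These are the degenerate case $F_j(m_j)=1$, where $s_{ij}=1$ is set by convention; achievements below the anchored support; and, for in-sample strict monotonicity, the possibility that $i$ is the minimum, so that the normalizer moves. I would treat these by checking that each convention is still a weakly decreasing function of $x_{ij}$ bounded by one, which is all the proofs use. The negative parts of Table~\ref{table:tab1} are outside the theorem's claims; the footnote examples already exhibit them.
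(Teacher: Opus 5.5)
Your proposal is correct and follows essentially the same route as the paper's appendix proof. Both rely on the same ingredients: censoring; the fact that a change at or above $z_\theta$ moves $F_\theta$ only at points not below the cutoff, which gives deprivation focus and in-sample poverty focus under union; a net-of-$i$ counting argument for the own score, in which the in-sample normalizer is either unchanged or yields $s'_{i\theta}=1$; and cell separability under anchored CDFs, which gives poverty focus, aggregate and dimensional monotonicity, and decomposability.

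Your strictness clause needs one correction. It also requires $s_{i\theta}<1$: a person already at the minimum who drops further keeps $s'_{i\theta}=1$, even when others occupy $(x'_{i\theta},x_{i\theta}]$. Likewise, in case (b) the new cell is positive only when $F_\theta(x'_{i\theta})<1$, which is how the paper states the condition.
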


\textbf{Proof.} The proof for Theorem \ref{thm:theorem} is in Appendix~\ref{appendix: B}.
\bigskip

\subsection{Correspondence with the Alkire–Foster poverty gap}
\label{section: correspondence_AFPG_PPG} 

The positional poverty gap and the Alkire--Foster poverty gap measure different concepts. Although both reflect the depth of poverty, they do so from distinct perspectives: the AF poverty gap measures the absolute shortfall from each deprivation threshold, while the positional poverty gap measures relative position within the achievement distribution. Thus, the two scores need not coincide in magnitude. Within a
dimension, however, they rank deprived observations in the same order when the
reference CDF distinguishes the observed deprived values. This condition holds automatically under in-sample CDFs, where every observed deprived value carries positive mass; under an anchored reference it can fail if two deprived values observed in the current sample are not separated by the reference distribution, in which case the two scores tie.

\begin{proposition}\label{prop:within-dim}
	Fix a dimension $j$. Suppose that $z_j>0$, $1-F_j(\min_i x_{ij})>0$, and that, for any
	two observed deprived values $u$ and $v$ in dimension $j$,
	$u<v<z_j  \Rightarrow F_j(u)<F_j(v).$
	Then, for any two individuals $a$ and $b$ deprived in dimension $j$,
	\[
	x_{aj}\le x_{bj}
	\quad\Longleftrightarrow\quad
	g^{AF}_{aj}\ge g^{AF}_{bj}
	\quad\Longleftrightarrow\quad
	s_{aj}\ge s_{bj},
	\]
	with strict inequalities whenever $x_{aj}<x_{bj}$. Hence the positional depth score and
	the Alkire--Foster normalized gap induce the same within-dimension ranking of deprived
	observations.
\end{proposition}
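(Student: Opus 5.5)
The plan is to prove the two equivalences separately, each by writing the score as a strictly decreasing function of $x_{aj}$ on the deprived set, and then to chain them. Throughout, $a$ and $b$ are deprived in $j$, so $x_{aj},x_{bj}<z_j$, and both values are observed deprived values to which the separation hypothesis applies.

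\emph{First equivalence (achievements and AF gaps).} Since $g^{AF}_{ij}=(z_j-x_{ij})/z_j = 1 - x_{ij}/z_j$ and $z_j>0$, the map $x\mapsto 1-x/z_j$ is strictly decreasing and affine. Hence $x_{aj}\le x_{bj}$ holds if and only if $g^{AF}_{aj}\ge g^{AF}_{bj}$, and $x_{aj}<x_{bj}$ holds if and only if $g^{AF}_{aj}>g^{AF}_{bj}$. Here I am using the uncensored gap for deprived individuals; if the censored version is meant, the censoring factor equals one for deprived individuals in this comparison. If the poverty censoring $\rho_i(k)$ were also applied, the statement would be about deprived poor observations, and I would note that both scores are computed before censoring.

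\emph{Second equivalence (achievements and positional scores).} Write $D_j = 1-F_j(\min_i x_{ij})>0$, so that $s_{ij}=(1-F_j(x_{ij}))/D_j$ is a strictly decreasing affine function of $F_j(x_{ij})$. It therefore suffices to show that $x_{aj}\le x_{bj}$ if and only if $F_j(x_{aj})\le F_j(x_{bj})$, with strictness matching. I would argue by cases.
\begin{itemize}
\item If $x_{aj}=x_{bj}$, the CDF values coincide, so $s_{aj}=s_{bj}$.
\item If $x_{aj}<x_{bj}$, both values are below $z_j$, so the hypothesis gives $F_j(x_{aj})<F_j(x_{bj})$, hence $s_{aj}>s_{bj}$.
\end{itemize}
This settles the forward direction together with strictness. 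For the converse, suppose $s_{aj}\ge s_{bj}$ but $x_{aj}>x_{bj}$. Applying the previous case with the roles of $a$ and $b$ swapped gives $s_{bj}>s_{aj}$, a contradiction. Hence the converse holds by contraposition, using only that the order on the reals is total.

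\emph{Where the difficulty lies.} The main point is the role of the separation hypothesis: monotonicity of an empirical or anchored $F_j$ alone only yields weak inequalities, and the hypothesis is exactly what upgrades these to strict ones. I would remark that under in-sample CDFs the hypothesis holds automatically, because $F_j(v)-F_j(u)\ge \Pr(X=v)>0$ for an observed $v$. The final statement, that the two scores induce the same ranking, then follows by chaining the two equivalences.
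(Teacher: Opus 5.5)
Your proof is correct and follows essentially the same route as the paper's. Both split into the cases $x_{aj}<x_{bj}$ and $x_{aj}=x_{bj}$, and use the strict monotonicity of the AF gap (from $z_j>0$) and of $s_{ij}$ (from the separation hypothesis together with $1-F_j(\min_i x_{ij})>0$); you also spell out the converse implications by swapping the roles of $a$ and $b$, which the paper leaves implicit.
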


\textbf{Proof.} The proof is given in Appendix~\ref{appendix: B}. It uses only the fact that the AF gap is strictly decreasing in achievement over the deprived range; the proposition therefore extends to any cardinal deprivation score with this property, including shortfalls of interval-scale indicators normalized by a fixed band, as used for nutrition in Section~\ref{section: illustration_mics}.
\bigskip 

Proposition~\ref{prop:within-dim} implies that any divergence between the rankings of the
individual positional poverty gap $S_i$ and the Alkire--Foster poverty gap $G_i$ does not come from
within-dimension reversals. Rather, it arises from the aggregation of
dimension-specific scores that use different reference points: the deprivation threshold
for $g^{AF}_{ij}$ and the achievement distribution for $s_{ij}$. If all poor individuals
are deprived in a single common dimension, the two measures rank them identically under
the conditions stated above.

How large this cross-dimensional divergence is depends on how the two reference points relate within each dimension. Under the Alkire--Foster poverty gap, a deprivation scores more when the achievement lies far below the deprivation threshold. Under the positional poverty gap, the same deprivation scores more when it leaves the individual behind a larger share of the population, regardless of its distance from the threshold. The two measures therefore may diverge in two opposite situations. A deprivation that is shallow in absolute terms but rare in the population, so that nearly everyone else is better off, receives a small AF gap but a high positional depth score. Conversely, a deprivation that is deep in absolute terms but widespread, so that many others share similarly low achievements, receives a large AF gap but a more moderate positional depth score. Individuals whose deprivations are concentrated in dimensions of the first type are ranked as poorer by the positional poverty gap than by the AF gap, and the reverse holds for the second type. Divergence between the two rankings is thus largest when the poor differ systematically in the type of dimension in which they are deprived. When individuals are instead deprived in several dimensions of opposite type, the corresponding differences partly cancel in the weighted average, so the two measures tend to rank them more similarly.

The two scores also span different ranges among the deprived. For non-negative
achievements, the Alkire--Foster gap takes values in $(0,1]$. The positional depth score instead
has a floor set by the prevalence of deprivation. Let $x_j^{-}$ denote the highest
deprived value observed in dimension $j$, so that $F_j(x_j^{-})$ equals the uncensored
headcount ratio of indicator $j$, denoted $h_j$. Since $s_{ij}$ is non-increasing in
$x_{ij}$, every deprived individual satisfies
\begin{equation}\label{eq:floor}
	s_{ij} \;\ge\; \frac{1-h_j}{1-F_j(\min_i x_{ij})} \;\ge\; 1-h_j .
\end{equation}
Every deprived individual thus scores at least the share of the population not
deprived in $j$. The lower range of the AF gap is set by the metric of the indicator,
while that of the positional depth score is set by how exceptional deprivation is in
the reference population.

 Subsection~\ref{section: illustration_mics} illustrates these results empirically.

\section{Illustrative application}
 \label{section: illustrations}
This section presents two applications of the method. The first draws on Brazilian administrative data and reports results across poverty lines and population subgroups, illustrating the potential added value of the positional poverty gap measure.
 
The second application uses data from 23 countries and is restricted to cardinal variables. It reports rank correlations and compares the rank distributions generated by the positional poverty gap and the AF poverty gap. The analysis examines the empirical relationship between the two measures and assesses whether, despite the high correlations observed across the country samples, they provide complementary information.
 
 \subsection{Brazil's Cadastro Único data}
This illustration applies the positional poverty gap to a sample from Brazil's Single Registry for Social Programmes (\textit{Cadastro Único}) 2018 \citep{brasil_cadunico_2018}, the administrative data used to assess eligibility for the Bolsa Família conditional cash transfer and other social programs. The sample consists of 4,807,996 households (12,852,599 individuals), representative of Cadastro Único registrants. With the information available in this database, I created the poverty index using eight indicators grouped into three dimensions: (1) education, with an adult-attainment indicator and a child age-for-grade indicator; (2) housing, including inadequate housing materials and crowding (persons per bedroom); and (3) basic services, with drinking water access, type of sanitation facility, garbage disposal, and access to electric lighting. The three dimensions are equally weighted, as are the indicators within each dimension. Table \ref{table:C_defs} details the indicator definitions and thresholds. All measures are computed at the household level using the survey weights.

 Because not all indicators are cardinal, it is not possible to calculate the AF poverty gap, so Table~\ref{tab:poverty_by_cutoff_CAD} shows the results for $H$, $H \cdot A$, and $P = H \cdot A \cdot S$. The results are presented for three poverty cutoffs: deprived in at least one indicator ($k=1/12$), which is equivalent to the union approach; deprived in the equivalent of one full dimension ($k=1/3$), the reference cutoff used throughout the rest of this illustration; and deprived in the equivalent of two of the three dimensions ($k=2/3$).

Table~\ref{tab:poverty_by_cutoff_CAD} shows that for all measures the values decrease as the poverty cutoff increases. When $k=1/3$, the share of multidimensionally poor households equals $H=31.7\%$, and the intensity equals $A=0.44$. The positional poverty gap equals $S=0.84$, meaning that, among the indicators in which they are deprived, poor households are on average 84\% as deeply deprived as the most deprived household observed in that indicator. These last two indices give the $H \cdot A$ and $P = H \cdot A \cdot S$ reported in the table.

\begin{table}[H]
      \centering
      \caption{Results by poverty cutoff ($k$), Brazil, Cadastro Único 2018}
      \begin{tabular}{lccc}
        \toprule
        Measure & $k=1/12$ (union) & $k=1/3$ & $k=2/3$ \\
            \midrule
        $H$                      & 0.810 & 0.317 & 0.032 \\
        $H \cdot A$              & 0.227 & 0.140 & 0.023 \\
        $P = H \cdot A \cdot S$  & 0.187 & 0.118 & 0.020 \\
        \bottomrule
      \end{tabular}
      \begin{tablenotes}
        \notesize
        \item \textit{Notes:} Survey-weighted. Population poor at each score cutoff $k$, using eight indicators grouped into three equally weighted dimensions (education, housing, services). $k=1/12$ is the union cutoff.
      \end{tablenotes}
      \label{tab:poverty_by_cutoff_CAD}
    \end{table}

     Figure~\ref{fig:cad_P_decomp} decomposes $P$ at the reference cutoff into the contribution of each of the eight indicators. Adult education alone accounts for 27.5\% of $P$, followed by crowding (17.4\%); sanitation, water, and the child education gap each contribute 11--13\%; housing materials and garbage contribute smaller, comparable shares (7--10\%); and electricity, the least prevalent deprivation in these data, contributes only 1.8\%.
     
\begin{figure}[H]
    \centering
    \includegraphics[width=0.75\textwidth]{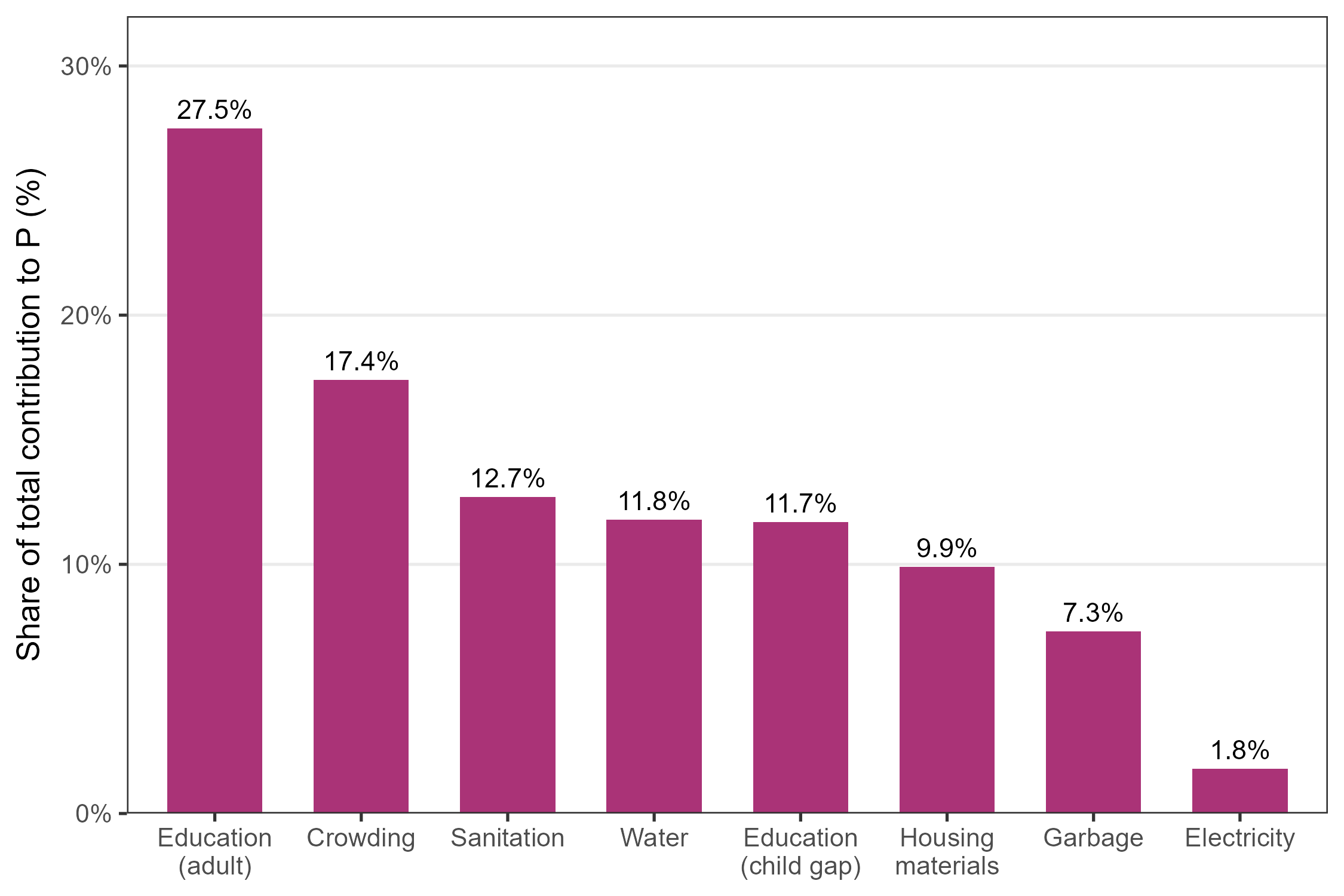}
    \caption{Decomposition of the positional poverty gap index $P$ by indicator, Brazil, Cadastro Único 2018.}
    \label{fig:cad_P_decomp}
    \vspace{4pt}
    \begin{minipage}{\textwidth}
    \noindent {\notesize \textit{Notes:} Survey-weighted. Bars show the percentage contribution of each indicator to $P$ at the reference cutoff $k=1/3$.\par}
    \end{minipage}
\end{figure}

Figure~\ref{fig:cad_dominance} plots the positional poverty gap index $P$ across poverty cutoffs $k$ for population subgroups. Poverty dominance is robust over the whole range of $k$: the North shows the highest values, followed by the Northeast, while the Southeast is the lowest; and households headed by Indigenous persons show the highest values, followed by those headed by Black or Pardo persons, and then by White or Asian persons. The curves converge as $k$ rises, so that at stricter cutoffs the subgroups differ less.

\begin{figure}[H]
    \centering
    \begin{subfigure}[t]{0.48\linewidth}
      \centering
      \includegraphics[width=\linewidth]{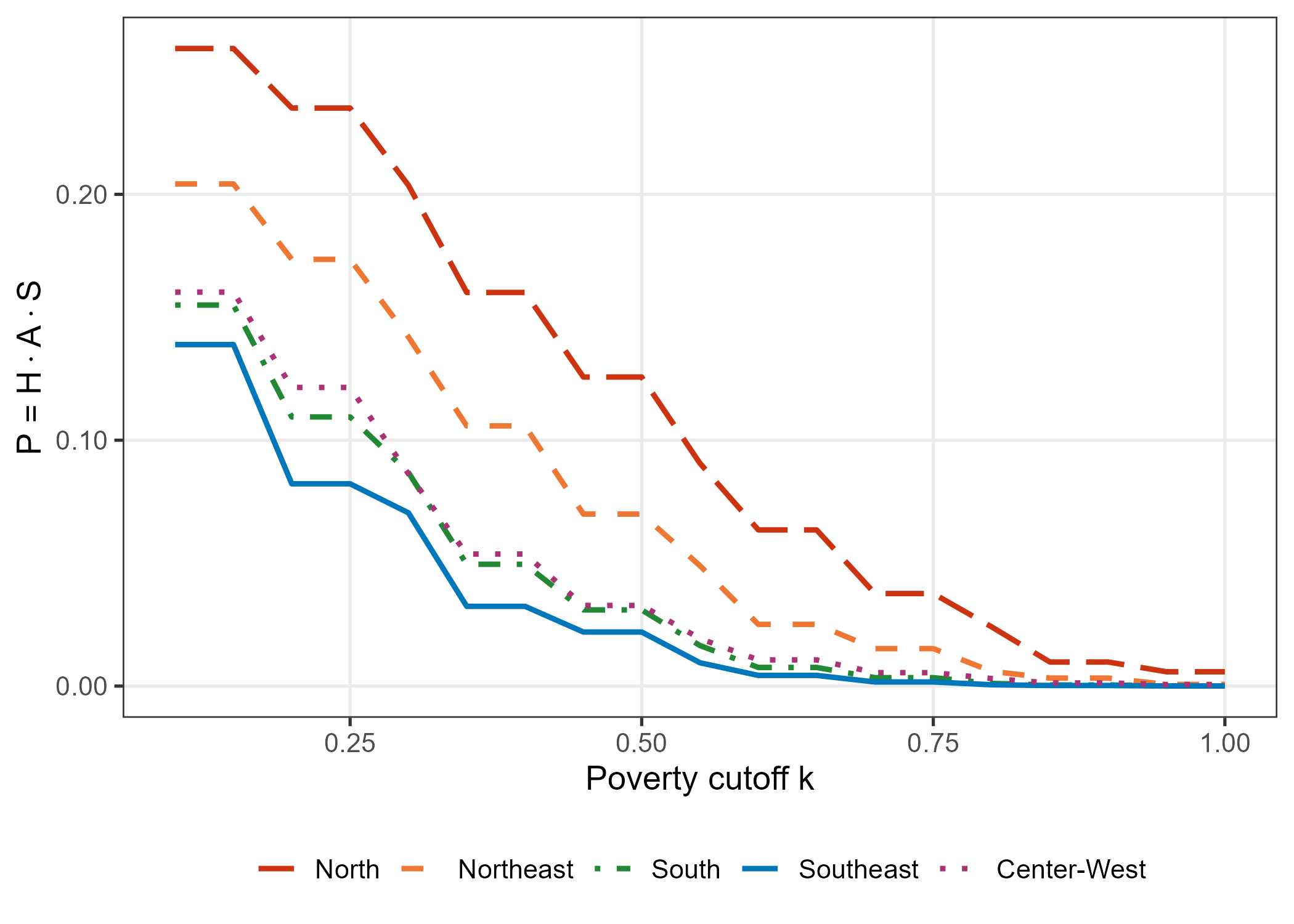}
      \caption{By macro-region}
      \label{fig:cad_dom_region}
    \end{subfigure}%
    \hspace{0.04\linewidth}%
    \begin{subfigure}[t]{0.48\linewidth}
      \centering
      \includegraphics[width=\linewidth]{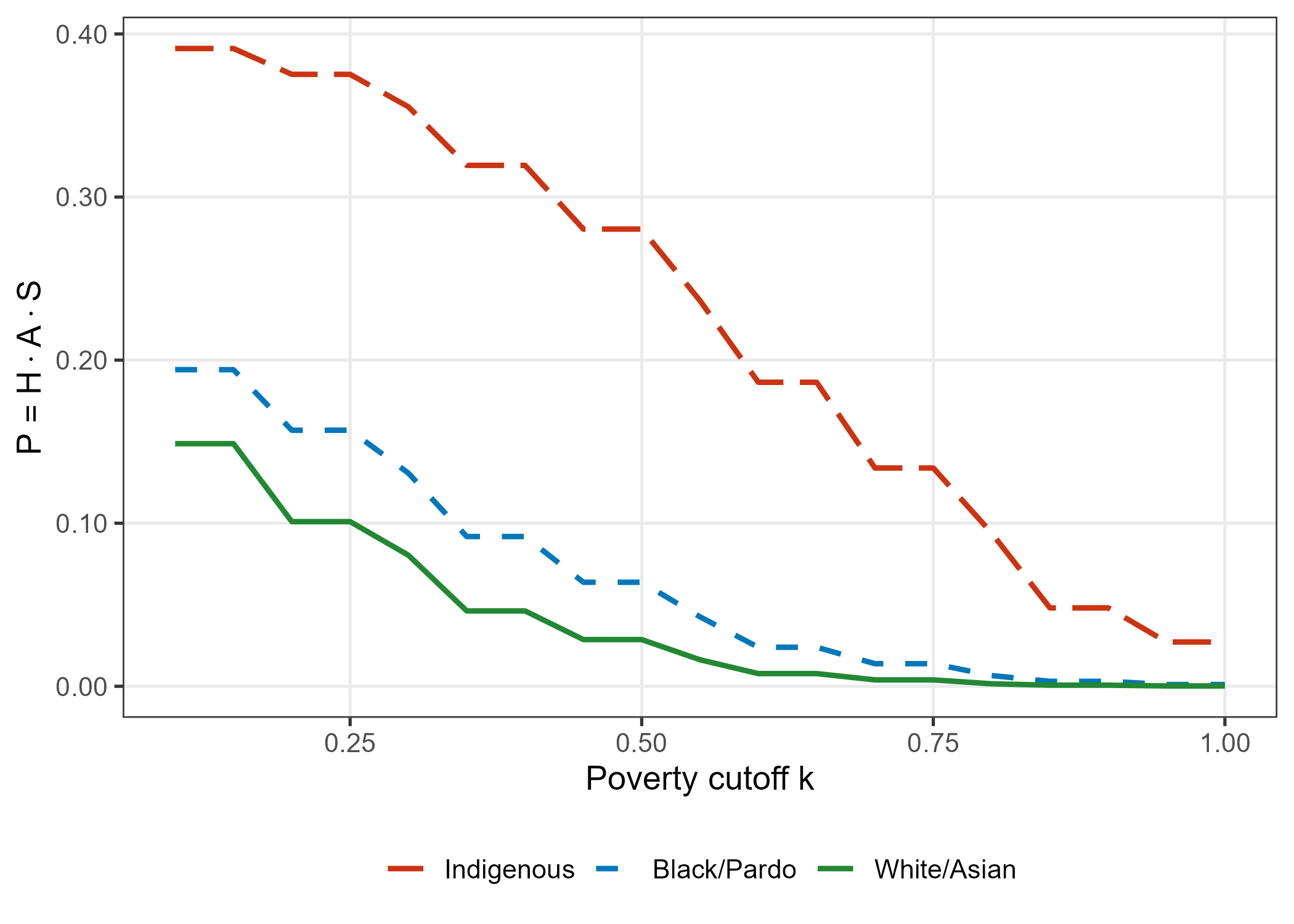}
      \caption{By race/ethnicity of household head}
      \label{fig:cad_dom_race}
    \end{subfigure}
    \caption{Poverty dominance of the positional poverty gap index ($P = H \cdot A \cdot S$) by poverty cutoff $k$, Brazil, Cadastro Único 2018.}
    \label{fig:cad_dominance}
    \vspace{4pt}
    \begin{minipage}{\textwidth}
    \noindent {\notesize \textit{Notes:} Survey-weighted. Each curve plots the positional poverty gap index $P$ against the score cutoff $k$ for a subgroup: macro-region in panel (a), race/ethnicity of the household head in panel (b).\par}
    \end{minipage}
\end{figure}

Figure~\ref{fig:cad_scatter_Si_Ai}(a) shows how the positional poverty gap varies within each intensity group. Even among households deprived in the fewest dimensions, $S_i$ ranges widely, up to its maximum: a household with only one or two deprivations can still be very deeply deprived. This variation is invisible to measures based on incidence and intensity alone, which would assign such a household the same value as any other with the same number of deprivations. As intensity rises, the positional gap tends to rise as well, though the interquartile range narrows only modestly.

 Figure~\ref{fig:cad_scatter_Si_Ai}(b) repeats this exercise among households actually receiving Bolsa Família, grouped by income per capita quintile. Depth is only mildly related to income among the program's beneficiaries: the median declines slightly as income rises, but every quintile, including the highest-income one, spans essentially the same range, from the shallowest to the deepest observed deprivation. Among households with similar income, the positional gap varies widely. This suggests that income-based identification, on its own, cannot determine which households are most deeply deprived, and that a depth-sensitive measure could usefully complement it.

\begin{figure}[H]
    \centering
    \begin{subfigure}[t]{0.48\linewidth}
      \centering
      \includegraphics[width=\linewidth]{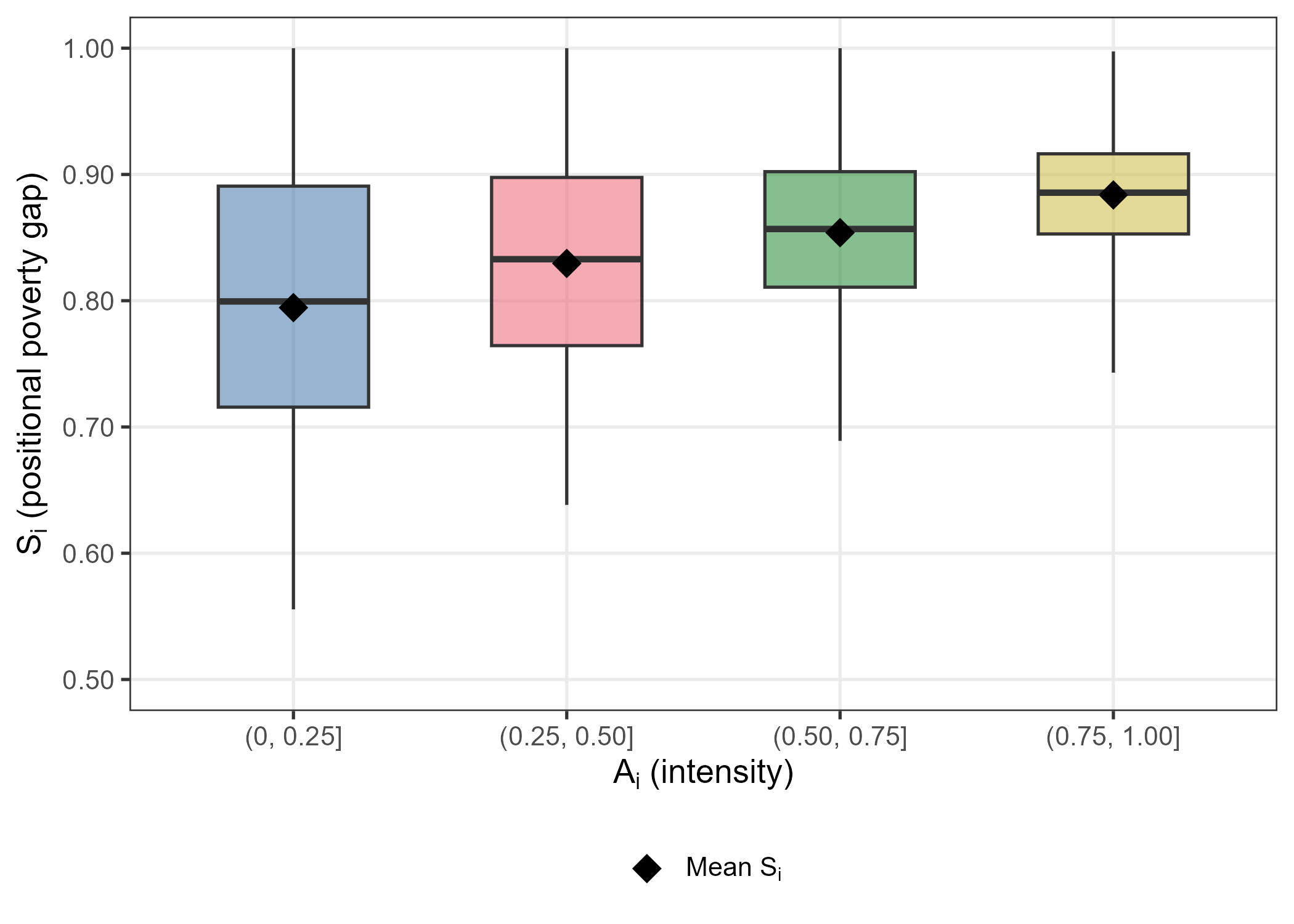}
      \caption{By deprivation intensity $A_i$}
      \label{fig:cad_box_A}
    \end{subfigure}%
    \hspace{0.04\linewidth}%
    \begin{subfigure}[t]{0.48\linewidth}
      \centering
      \includegraphics[width=\linewidth]{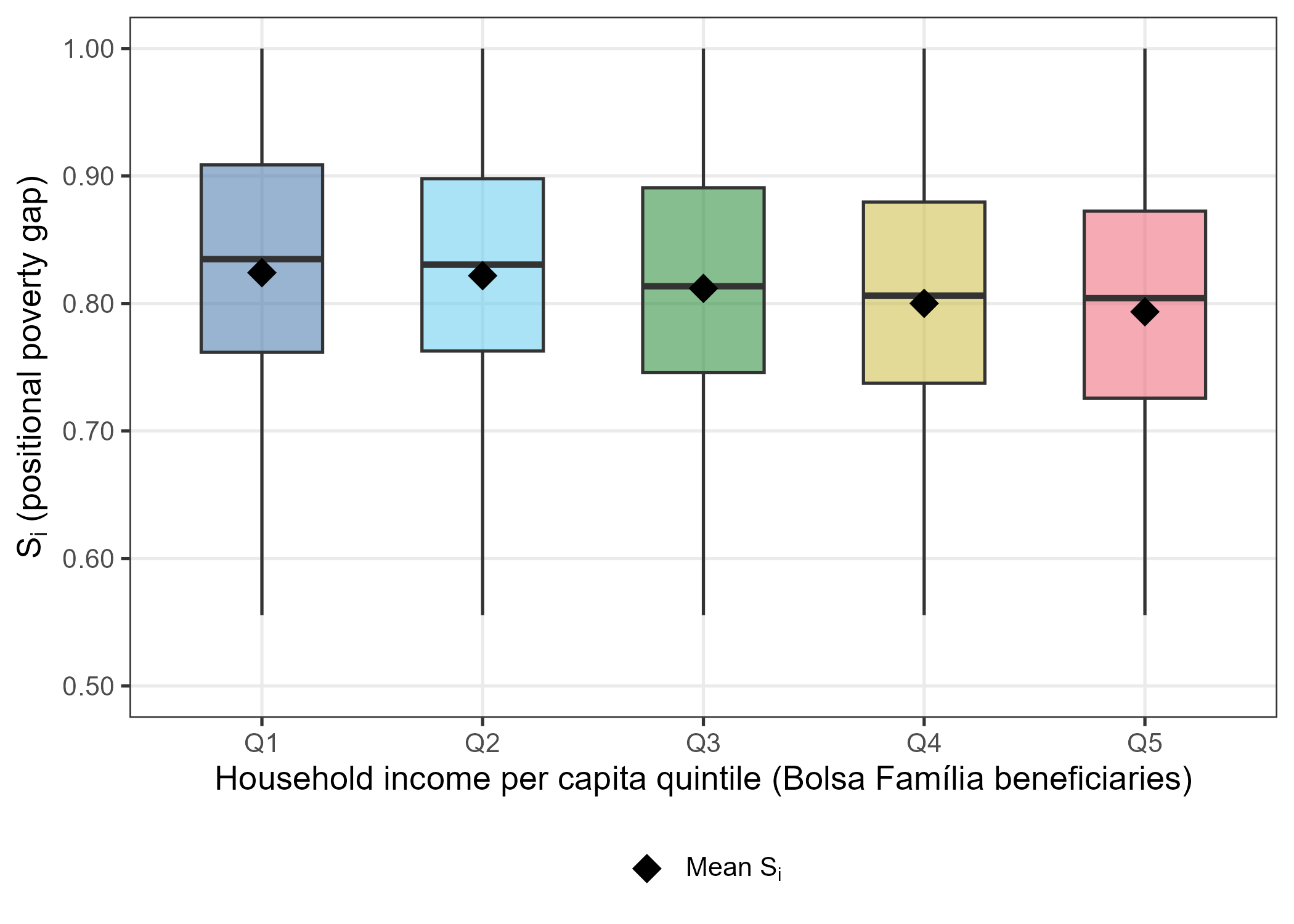}
      \caption{By income quintile, Bolsa Família beneficiaries}
      \label{fig:cad_box_incomeBF}
    \end{subfigure}
    \caption{Individual positional poverty gap $S_i$: distribution by deprivation intensity and by income, Brazil, Cadastro Único 2018.}
    \label{fig:cad_scatter_Si_Ai}
    \vspace{4pt}
    \begin{minipage}{\textwidth}
    \noindent {\notesize \textit{Notes:} Survey-weighted. Panel (a): households poor under the union cutoff, grouped by deprivation intensity $A_i$. Panel (b): households enrolled in Bolsa Família with at least one deprivation and non-missing income, grouped by income per capita quintile. Boxes show the weighted interquartile range, whiskers the weighted minimum and maximum, and diamonds the weighted mean $S_i$ within each bin.\par}
    \end{minipage}
\end{figure}

This illustration shows that the positional poverty gap displays robust dominance across subgroups and reveals depth that standard approaches miss. Breadth-based measures assign the same value to households with the same weighted deprivation score regardless of their depth. Income is only weakly related to depth, so households with the same income can differ sharply in deprivation depth.

\subsection{Cross-country MICS data}
 \label{section: illustration_mics}
 
As mentioned in Subsection \ref{section: correspondence_AFPG_PPG}, the AF poverty gap and the positional poverty gap measure different concepts, so their magnitudes are not directly comparable. For this reason, in this subsection I compare only their rankings. Within each dimension, the two measures rank individuals in the same way. The question is whether this agreement remains after several dimensions are combined into one score. A close association among the multiply-deprived would indicate that, in a given setting, the positional poverty gap produces individual orderings similar to those of the AF poverty gap, without implying equivalence between the two measures.

This section analyzes 23 of the 33 UNICEF Multiple Indicator Cluster Surveys (MICS) conducted between 2017 and 2023 that contain cardinal measures of three indicators commonly used in multidimensional poverty analysis \citep{unicef_mics}. The 23 retained countries have at least 200 individuals deprived in two or more indicators\footnote{The ten excluded countries have fewer than 200 multiply-deprived individuals; one has none, so its correlation is undefined, and the other nine range from 0.55 to 1.00.}. The indicators are education, measured in years of schooling; nutrition, measured by child anthropometric z-scores; and assets, measured as a count of household assets. For each indicator, the positional scores use CDFs computed separately within each country, over the individuals with a valid observation of that indicator. In Appendix~\ref{appendix: C}, Table \ref{table:C_defs} details the indicator definitions, and Table \ref{table:C_deprivrates} reports the survey-weighted deprivation rate of each indicator for all 33 countries, including the ten excluded from the analysis.

For each equally weighted indicator, I compute the normalized AF poverty gap scores $g^{AF}_{ij}$ and the positional depth scores $s_{ij}$. The AF poverty gap requires ratio-scale measurement, so I assume each cardinal comparator to be ratio-scale, directly for education and after the scale adjustments described in the notes to Table~\ref{table:C_defs} for assets and nutrition.
Because the two measures produce the same ranking within each dimension, I focus on individuals deprived in at least two of the three indicators ($k=2/3$), for whom the two measures aggregate deprivations of different types within the same profile, and compute $G_i$ and $S_i$ as the corresponding censored weighted averages over deprived indicators.

Figure~\ref{fig:forest_correspondence} reports the country-level weighted Spearman correlation between the two measures across the 23 countries. The cross-country median is approximately $0.83$, and in most countries the two measures agree closely on who is most and least deprived.

\begin{figure}[H]
	\centering
	\includegraphics[width=0.75\textwidth]{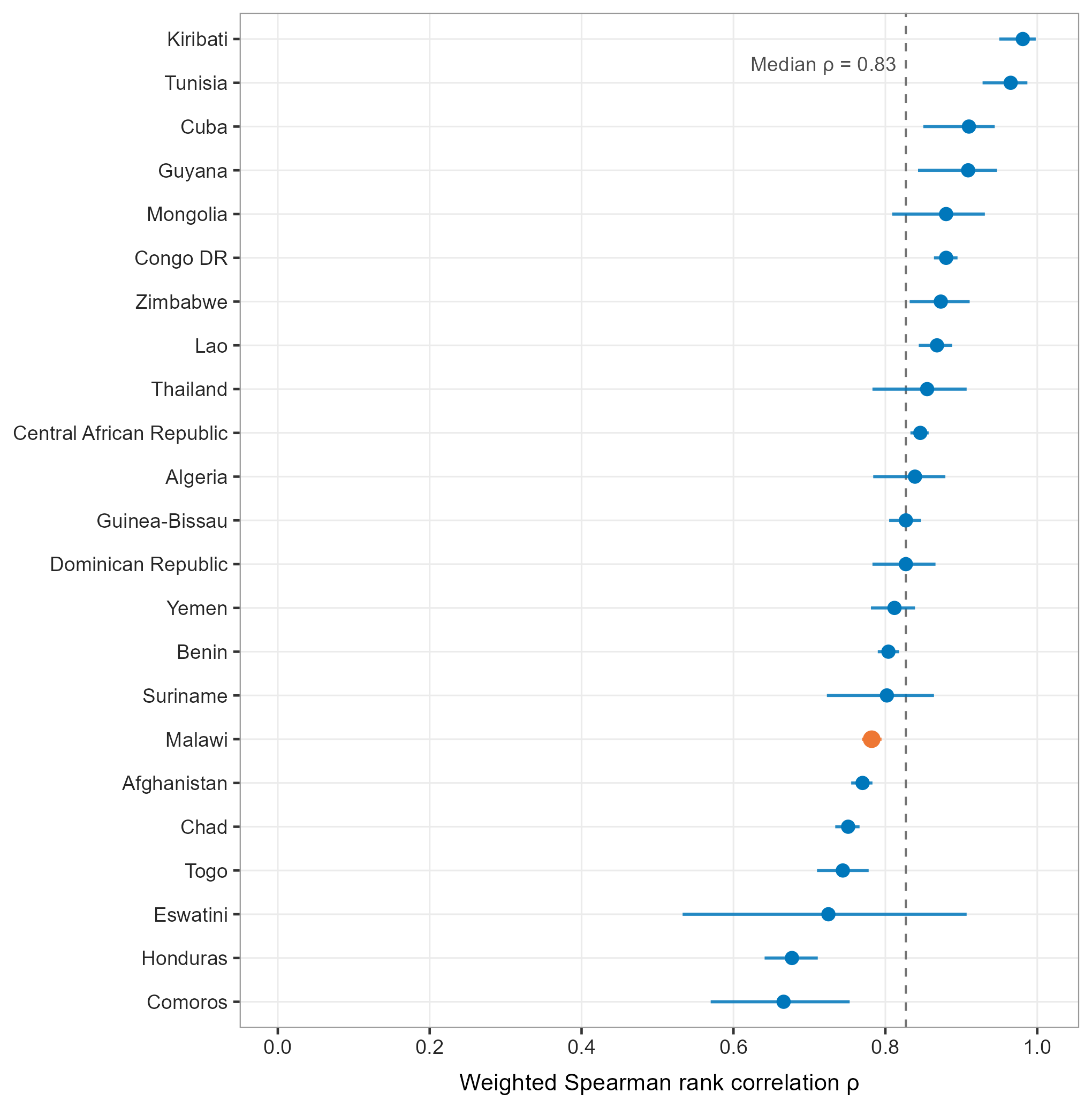}
	\caption{Country-level rank correspondence between the positional poverty gap and the Alkire--Foster poverty gap, MICS countries.}
	\label{fig:forest_correspondence}
	\vspace{4pt}
	\begin{minipage}{\textwidth}
		\noindent {\notesize \textit{Notes:} Each point is a country's weighted Spearman correlation between the individual positional and Alkire--Foster poverty gaps, among its multiply-deprived poor. Horizontal lines are 95\% confidence intervals from a stratified cluster bootstrap; the dashed line marks the cross-country median. Malawi (Figure~\ref{fig:malawi_rankdiff}) is shown in orange.\par}
	\end{minipage}
\end{figure}

This close agreement does not mean the two measures are identical. To illustrate how they can differ, Figure~\ref{fig:malawi_rankdiff} looks at one country, Malawi. For each person among the multiply-deprived poor, the figure compares the rank given by each measure. The horizontal axis is the difference in percentile rank, $r(S_i) - r(G_i)$, where $r$ is the weighted percentile rank within the group. Most people sit near zero, as they receive almost the same rank under both measures, which fits the high correlation. The rest divide into two tails, defined as percentile-rank differences below $-0.10$ and above $+0.10$. The concentration of disagreements in two tails on opposite sides of zero suggests that they are structured rather than random.

\begin{figure}[H]
	\centering
	\includegraphics[width=0.7\textwidth]{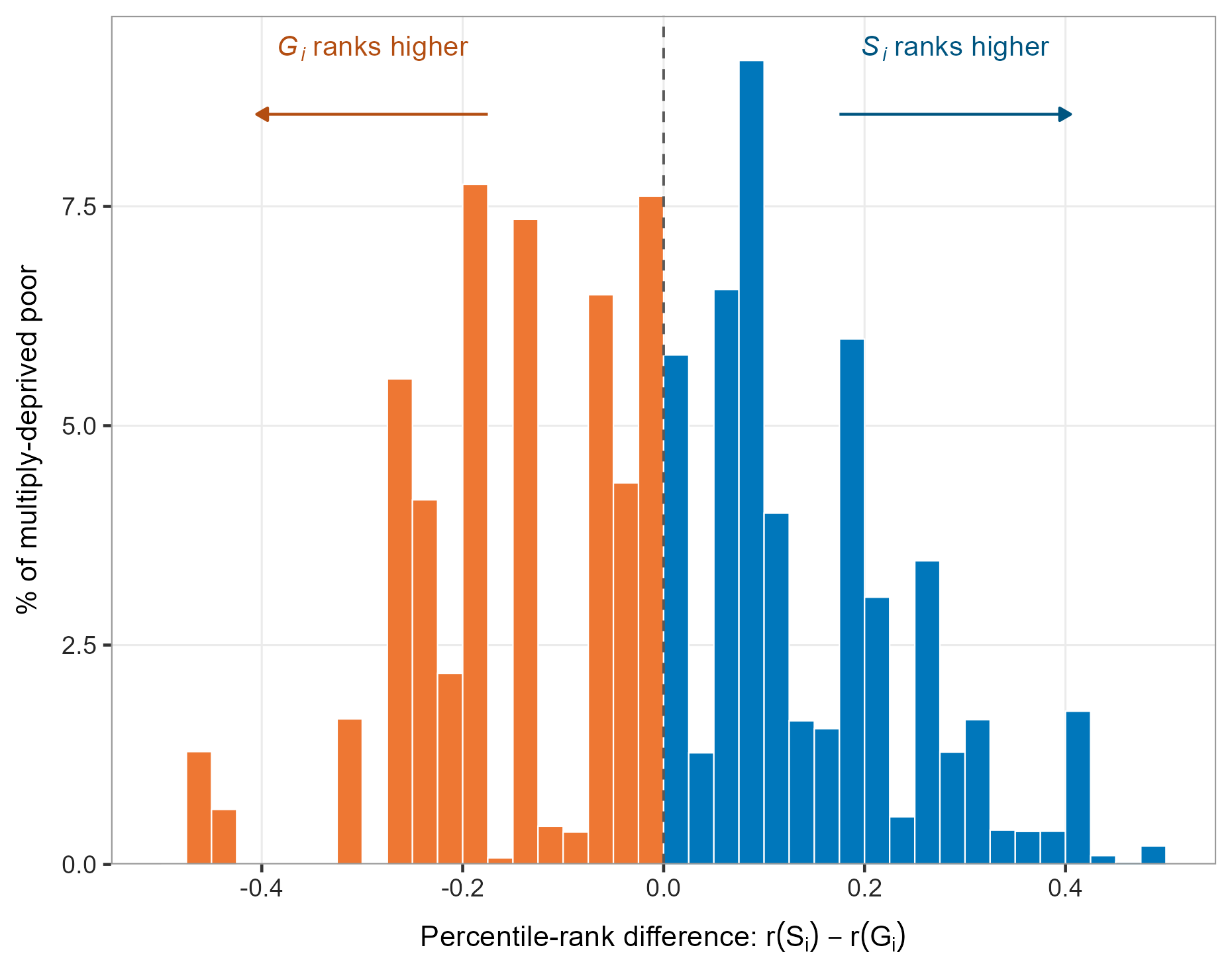}
	\caption{Distribution of individual rank differences between the positional poverty gap and the Alkire--Foster poverty gap, multiply-deprived poor, Malawi (MICS 2019--20).}
	\label{fig:malawi_rankdiff}
	\vspace{4pt}
	\begin{minipage}{\textwidth}
		\noindent {\notesize \textit{Notes:} Survey-weighted. Bars give the share of Malawi's multiply-deprived poor by percentile-rank difference $r(S_i) - r(G_i)$. Orange bars (left) are individuals ranked higher by the AF poverty gap; blue bars (right), higher by the positional poverty gap.\par}
	\end{minipage}
\end{figure}

These two tails correspond to different deprivation profiles. Table~\ref{tab:malawi_tails} reports, for each indicator, the deprivation rate in the full population, the share of individuals deprived in it among the poor and within each tail, and the mean and observed range of the AF gap and of the positional depth among the deprived. In the positive tail, where the positional poverty gap assigns the higher rank, every individual is deprived in nutrition ($100\%$, against $53\%$ of the poor overall), while in the negative tail only $2\%$ are. In the negative tail, where the AF poverty gap assigns the higher rank, $100\%$ of individuals are deprived in education, against $71\%$ among the poor overall. Asset deprivation, nearly universal in this sample ($94\%$), appears at high rates in both tails and so does not drive the divergence.

\begin{table}[H]
      \centering
      \caption{Composition of the rank-difference tails, Malawi (MICS 2019--20)}
      \footnotesize
      \setlength{\tabcolsep}{3pt}
      \begin{tabular}{@{}lcccccccc@{}}
        \toprule
                  & \multicolumn{4}{c}{\% deprived among:} & \multicolumn{2}{c}{AF gap} & \multicolumn{2}{c}{Positional depth} \\
        \cmidrule(lr){2-5} \cmidrule(lr){6-7} \cmidrule(lr){8-9}
        Indicator & Population & Poor & Neg.\ tail & Pos.\ tail & Mean & Min--max & Mean & Min--max \\
        \midrule
        Education & 28 & 71 & 100 & 58  & 0.46 & 0.17--1.00 & 0.86 & 0.74--1.00 \\
        Nutrition & 24 & 53 & 2   & 100 & 0.06 & 0.00--0.57 & 0.88 & 0.76--1.00 \\
        Assets    & 56 & 94 & 100 & 80  & 0.80 & 0.50--1.00 & 0.85 & 0.62--1.00 \\
        \bottomrule
      \end{tabular}
      \begin{tablenotes}
        \notesize
        \item \textit{Notes:} Survey-weighted. The population column reports each indicator's uncensored headcount ratio $h_j$ in equation~\eqref{eq:floor}; the next three columns report the share (percent) of multiply-deprived poor individuals deprived in each indicator, overall and within the negative and positive rank-difference tails of Figure~\ref{fig:malawi_rankdiff}. The last four columns report the mean and the observed range of the AF gap and of the positional depth among the deprived.
      \end{tablenotes}
      \label{tab:malawi_tails}
    \end{table}

The divergence arises from how the two measures evaluate nutrition relative to education once deprivations are combined into an individual-level score. On the positional scale, the two deprivations have similar depths (mean values of $0.86$ and $0.88$ among the deprived). Because about three quarters of Malawi's population lies above either cutoff, the floor in equation~\eqref{eq:floor} assigns every deprived individual a relatively high positional score. On the AF scale, by contrast, education is substantially deeper than nutrition. Nutrition-deprived individuals tend to lie only slightly below the cutoff, producing small normalized gaps (mean $0.06$), whereas education-deprived individuals are distributed across the full range below the cutoff and therefore have much larger gaps (mean $0.46$). Once these depths are averaged across an individual's deprived dimensions, the difference can reverse the ranking of deprivation profiles. A profile concentrated in nutrition receives a high positional score but a small AF gap and therefore tends to appear in the positive tail; a profile concentrated in education tends instead to appear in the negative tail.

The high positional depth for nutrition does not imply that a moderate undernutrition value is intrinsically or clinically severe. A z-score below $-2$ marks the threshold at which nutritional status is classified as undernutrition under the WHO growth standards \citep{who2006growth}. The cutoff therefore identifies a deprivation state, while the positional score records how exceptional that state is relative to others. In a population where three quarters are above the nutrition cutoff, being below it places an individual in a small minority that is disadvantaged relative to the rest, which is what a relative reading of depth registers. Nutrition-deprived individuals are more common in the positive tail in all 23
countries, while the individuals that fill the negative tail depend more on
each country's deprivation pattern.

These results suggest that the positional poverty gap tends to produce rankings closely associated with those of the AF poverty gap and may therefore be informative in settings where cardinal indicators are unavailable. At the same time, the measures capture different notions of depth: absolute shortfall from a normative standard versus relative disadvantage within the distribution of achievements. This mirrors a broader distinction in the poverty-measurement literature between absolute and relative conceptions, which can yield genuinely different assessments rather than more or less precise estimates of the same quantity \citep{Decerf2017, DecerfEtAl2025}. The two depth measures are therefore best read as complementary, not as interchangeable estimators of a single underlying quantity.

\section{Conclusion}
 \label{section: conclusion}

This paper proposes a positional poverty gap measure for capturing depth in ordinal settings, an area that has received relatively limited attention in multidimensional poverty measurement. Drawing on the distributional logic of the TFR approach within the AF counting framework, the measure interprets depth as the normalized distance of an individual’s relative position from the top of the distribution.

The measure can help distinguish individuals with relatively few but deeper deprivations and highlight subgroups with low incidence but high positional poverty gaps. Such patterns may not be fully captured by incidence and intensity alone, so positional depth can provide useful complementary information for identifying vulnerable groups and informing policy targeting.

The empirical applications illustrate these additional insights. In Brazil, the positional poverty gap reveals substantial heterogeneity among poor households that intensity alone does not capture. This variation persists among households with similar incomes, showing that income alone cannot distinguish which households experience deeper deprivation. The cross-country MICS comparison shows a high rank association between the positional poverty gap and the AF poverty gap, while showing that the two capture different aspects of depth. The AF poverty gap reflects absolute distance from the deprivation threshold, while the positional gap reflects relative position within the distribution, thereby capturing the context of deprivation.

The paper characterizes the proposed depth score and shows that it satisfies desirable axiomatic properties when the reference distribution is fixed over time. Comparisons across time or countries, however, require an explicit normative choice of reference distribution, and the paper discusses the tradeoffs associated with alternative benchmarks.

The framework fits naturally within existing AF applications, which may facilitate its use in empirical and policy analysis. In this sense, the positional poverty gap can complement existing measures by adding information on the context of disadvantages.

\newpage

\bibliographystyle{apalike} 
\bibliography{mybiblio2024_v4}
\addcontentsline{toc}{section}{\refname}





\newpage

\appendix

\section{Admissible transformations}
\label{appendix: A}
\setcounter{table}{0}
\renewcommand{\thetable}{A\arabic{table}}

\begin{table}[ht]
	\centering
	\caption{Admissible transformations for variables that are relevant for poverty measurement}
	\footnotesize
	\setlength{\tabcolsep}{2pt}
	\begin{tabular}{p{5.5em}p{6.285em}p{6.215em}p{9.6em}p{8.645em}}
		\toprule
		\textbf{Type of variable} & \textbf{Admissible transformations} & \textbf{Admissible mathematical operations} & \textbf{Permissible statistics} & \textbf{Examples relevant for poverty measurement} \\
		\midrule
		Qualitative: Ordinal & Order-preserving transformations (monotonic increasing functions). & \multicolumn{1}{l}{None} & Frequency distribution, mode, contingency, correlation, median percentiles & Type of sanitation facility, source of drinking water, cooking fuel, floor, walls, and roof; levels of schooling. \\
		Cardinal: Interval scale & linear transformations & \multicolumn{1}{l}{Add, subtract} & Frequency distribution, mode, contingency, correlation, median percentiles, mean, standard deviation, rank-order, correlation, product-moment correlation. & z-scores of nutritional indicators (e.g., weight-for-age or body-mass-index-for-age). \\
		Cardinal: Ratio scale & linear and ratio transformations & Add, subtract, divide, multiply & Frequency distribution, mode, contingency, correlation, median percentiles, mean, standard deviation, rank-order, correlation, product-moment correlation, coefficient of\newline{}variation & Income, consumption expenditure, body mass index, number of deaths a mother experienced, years of schooling, numbers of bedrooms, number of assets (e.g., car, computer, fridge, and others). \\
		\hline \hline
	\multicolumn{5}{l}{Source: Adapted from \citet{Stevens1946} and \citet{AlkireEtAl2015}.}
	\end{tabular}%
	\label{table:A1}%
\end{table}%

\newpage

\section{Proofs}
\label{appendix: B}
\textbf{Proof of Theorem~\ref{thm:rd_foundation} (characterization of the positional depth score).}
Additive decomposition applied with an empty comparison group gives
$\delta_{ij}(X_j,\emptyset)=2\,\delta_{ij}(X_j,\emptyset)$, hence $\delta_{ij}(X_j,\emptyset)=0$.
By focus and repeated additive decomposition,
\[
\delta_{ij}(X_j,\mathcal{R})\;=\;\sum_{h\in \mathcal{R}:\;x_{hj}>x_{ij}} \delta_{ij}(X_j,\{h\}).
\]
By pairwise comparison, each term equals $f(x_{ij},x_{hj})$ for a single function $f$;
by focus, only its values on ordered pairs $\{(u,v):u<v\}$ are relevant. By ordinal invariance,
$f(\varphi(u),\varphi(v))=f(u,v)$ for every strictly increasing $\varphi$ of the
real line. For any two ordered pairs $u<v$ and $u'<v'$ there exists a strictly
increasing bijection of the real line, piecewise linear for instance, mapping $u$ to $u'$ and $v$ to $v'$;
hence $f$ is constant on this domain, $f\equiv \lambda$; non-triviality and nonnegativity give $\lambda> 0$. Therefore
\[
\delta_{ij}(X_j,\mathcal{R})\;=\;\lambda\cdot\big|\{h\in \mathcal{R}:\;x_{hj}>x_{ij}\}\big|,
\]
proportional to the population share $1-F_j(x_{ij})$\footnote{The conditions and the proof are stated on the reference population itself, so that
$F_j$ is the population distribution and the count is over population members. With survey data,
the weighted sample shares estimate the population shares $1-F_j(x_{ij})$; weights are an
estimation device and play no role in the characterization.}. Since
$F_j(\min_i x_{ij})<1$, the most deprived member of $\mathcal{R}$ has a
nonzero deprivation value, and the normalized score is
\[
s_{ij}\;=\;\frac{\lambda\,\big|\{h\in \mathcal{R}:\;x_{hj}>x_{ij}\}\big|}{\lambda\,\big|\{h\in \mathcal{R}:\;x_{hj}>\min_i x_{ij}\}\big|}
\;=\;\frac{1-F_j(x_{ij})}{1-F_j(\min_i x_{ij})},
\]
in which the constant and the population size cancel. Conversely, for any $\lambda> 0$ the deprivation function
$\delta_{ij}(X_j,B)=\lambda\,\big|\{h\in B:\;x_{hj}>x_{ij}\}\big|$ satisfies pairwise comparison, focus,
additive decomposition, and ordinal invariance, and induces the score above when evaluated at
$B=\mathcal{R}$ and normalized. \qed

\medskip

\emph{Independence of the conditions.} Each condition excludes a distinct family of
alternatives, and no condition is implied by the others. Without pairwise comparison,
distribution-dependent bilateral terms survive, such as
$\delta_{ij}(X_j,\{h\})=\mathbf{1}\{x_{hj}>x_{ij}\}\,\big(1-F_j(x_{hj})\big)$, which
weights each better-off comparator by that person's own position while satisfying
focus, additive decomposition, and ordinal invariance. Without additive
decomposition, any increasing transformation of the count survives: the deprivation function
$\delta_{ij}(X_j,B)=\big|\{h\in B:\;x_{hj}>x_{ij}\}\big|^{\gamma}$ with $\gamma>0$ and $\gamma\neq 1$
satisfies the remaining conditions and induces the normalized score
$\big[(1-F_j)/(1-F_j(\min_i x_{ij}))\big]^{\gamma}$; additivity
fails because $(a+b)^{\gamma}\neq a^{\gamma}+b^{\gamma}$ for comparison groups of
sizes $a,b\ge 1$. Without ordinal invariance, the elementary comparison may depend on
the categories involved, admitting category-weighting schemes in the spirit of
\citet{SethYalonetzky2021}; replacing ordinal invariance with translation invariance
and linear homogeneity yields Yitzhaki's cardinal index up to normalization
\citep{EbertMoyes2000}. Without the equality part of focus, the score that also counts
individuals at the same position, proportional to the share with $x_{hj}\ge x_{ij}$,
satisfies the remaining conditions. Without the
normalization of Section~\ref{section: distributional_poverty_gap}, any positive
multiple of the share survives.

\bigskip

\textbf{Proofs of Theorem~\ref{thm:theorem} (Axiomatic properties).}

\textbf{Deprivation Focus.}
At $(i,\theta)$, $g^0_{i\theta}=0$ both before and after, hence $g^1_{i\theta}(k)=\rho_i(k)\,s_{i\theta}\,g^0_{i\theta}=0$ in both matrices.
For any deprived peer $(p,\theta)$ with $x_{p\theta}<z_\theta$, the CDF term $F_\theta(x_{p\theta})$ is the (weighted) \emph{share} of the
reference distribution, in-sample or anchored, at or below $x_{p\theta}$. Since $x_{i\theta},x'_{i\theta}\ge z_\theta>x_{p\theta}$, this share is
unchanged, so $s_{p\theta}$ (and thus $g^1_{p\theta}(k)$) is unchanged. The normalizer $1-F_\theta(\min_i x_{i\theta})$ is also unchanged:
with in-sample CDFs a change above $z_\theta$ cannot alter the minimum among deprived values (and if no one is deprived,
the column contributes zero); with anchored CDFs the reference minimum is fixed. Therefore $P(X')=P(X)$.

\bigskip

\textbf{Poverty Focus.}
\emph{Anchored CDFs.} For the non-poor person $i^\star$, $\rho_{i^\star}(k)=\rho'_{i^\star}(k)=0$, so
$g^1_{i^\star j}(k)=\rho_{i^\star}(k)\,s_{i^\star j}\,g^0_{i^\star j}=0$ and
$g^{1\,\prime}_{i^\star j}(k)=\rho'_{i^\star}(k)\,s'_{i^\star j}\,g^{0\,\prime}_{i^\star j}=0$
for all $j$, regardless of $x'_{i^\star j}$. For every other $(p,j)$ with $p\neq i^\star$,
anchoring keeps $s_{pj}$ fixed; and by hypothesis $x'_{pj}=x_{pj}$, hence
$g^{1\,\prime}_{pj}(k)=g^1_{pj}(k)$. Therefore every summand in
$P(X;z,k)=(1/n)\sum_{i,j} w_j g^1_{ij}(k)$ is unchanged and $P(X')=P(X)$.

\emph{In-sample CDFs under union identification.}
Here $\rho_{i^\star}(k)=0$ implies $g^0_{i^\star j}=0$ for all $j$, i.e.\ $x_{i^\star j}\ge z_j$ in every
indicator, and the same holds after the change. For any indicator $j$ and any deprived peer $p$ with
$x_{pj}<z_j$, the empirical CDF $F_j^{\mathrm{IS}}(x_{pj})$ equals the (weighted) share of observations
$\le x_{pj}$. Because $x_{i^\star j},x'_{i^\star j}\ge z_j > x_{pj}$, that share at $x_{pj}$ is unchanged,
so the numerators $1-F_j^{\mathrm{IS}}(x_{pj})$ are unchanged for all deprived peers. The denominator
$1-F_j^{\mathrm{IS}}(m_j)$, with $m_j=\min_i x_{ij}$, is also unchanged: if there exists at least one
deprived observation in $j$, then $m_j<z_j$ and altering a non-deprived value cannot change the minimum;
if there are no deprived observations, then $g^0_{pj}(k)\equiv 0$ and the column contributes zero both
before and after. Hence all $g^1_{pj}(k)$ are unchanged for every $(p,j)$, and the terms for $i^\star$
are zero both before and after. Consequently, $P(X')=P(X)$.

\bigskip

\textbf{Monotonicity.} 
\textit{Own-person.} Consider a worsening for person $i$ in indicator $\theta$.

\emph{Case (i):} $x'_{i\theta}<x_{i\theta}<z_\theta$ (still deprived). Then $g^0_{i\theta}$ and $\rho_i(k)$
are unchanged, and only $s_{i\theta}$ moves. Under \emph{anchored} CDFs, the reference $F_\theta$ and its minimum
are fixed and unaffected by the change. Since $F_\theta$ is nondecreasing,
$x'_{i\theta}<x_{i\theta}\Rightarrow F_\theta(x'_{i\theta})\le F_\theta(x_{i\theta})$, hence $s'_{i\theta}\ge s_{i\theta}$;
if $x'_{i\theta}$ falls below the reference support, the convention scores it at one, the maximum value.
Under \emph{in-sample} CDFs, both $F_\theta$ and, possibly, the column minimum $m_\theta$ are recomputed after
the change, so the comparison proceeds by counting individuals. The new numerator $1-F'_\theta(x'_{i\theta})$ is
the share of individuals with achievements strictly above $x'_{i\theta}$ in the new sample; because only person
$i$ has moved, and moved downward, this set contains every individual counted in the old numerator
$1-F_\theta(x_{i\theta})$, so the numerator cannot fall. For the normalizer, if $x'_{i\theta}>m_\theta$ the mass
at or below $m_\theta$ is unchanged, so the normalizer is unchanged and $s'_{i\theta}\ge s_{i\theta}$; if
$x'_{i\theta}\le m_\theta$, person $i$ becomes (one of) the most deprived in the column and
$s'_{i\theta}=1\ge s_{i\theta}$. In both implementations the inequality is strict under the positive-mass
condition of the statement: if $s_{i\theta}<1$ and the reference distribution places positive mass in
$(x'_{i\theta},x_{i\theta}]$, contributed by individuals other than $i$ in the in-sample case, since the own
mass moves with the change, then $s'_{i\theta}>s_{i\theta}$. Therefore
\[
\Delta P_i
= \rho_i(k)\,w_\theta\big(s'_{i\theta}-s_{i\theta}\big)\;\ge\;0,
\]
strictly positive when $\rho_i(k)=1$ under the positive-mass condition, while for a non-poor person all cells are censored and $\Delta P_i=0$.

\emph{Case (ii):} $x'_{i\theta}<z_\theta\le x_{i\theta}$ (crosses into deprivation). Here $g^0_{i\theta}$
switches from $0$ to $1$. If $i$ was already poor, $\rho_i(k)=\rho'_i(k)=1$ and a single new non-negative
term appears:
\[
\Delta P_i
= w_\theta\,s'_{i\theta}\;\ge\;0.
\]
If the change makes $i$ newly poor ($\rho_i(k)=0$, $\rho'_i(k)=1$), censoring is lifted from all of $i$'s
deprived indicators, so
\[
\Delta P_i
= w_\theta\,s'_{i\theta} \;+\; \sum_{j\neq\theta} w_j\,s_{ij}\,g^0_{ij}\;\ge\;0,
\]
where every added term is non-negative; if $i$ remains non-poor ($\rho'_i(k)=0$), $\Delta P_i=0$. The change
is strictly positive whenever $\rho'_i(k)=1$ and $F_\theta(x'_{i\theta})<1$. Hence dimensional monotonicity
also holds.

In all cases, a worsening cannot reduce $P_i$.

\medskip

\textit{Aggregate (anchored CDFs).} Let $X'$ be obtained by worsening cell $(i,\theta)$.
With anchored $\{F_j\}$, the positional scores $s_{pj}$ are unchanged for all $(p,j)\neq(i,\theta)$.
The censored contribution at $(i,\theta)$ can therefore change directly through
$g^1_{i\theta}(k)=\rho_i(k)\,g^0_{i\theta}\,s_{i\theta}$, and, if the worsening makes $i$ newly poor,
so can $i$'s other deprived cells $g^1_{ij}(k)$, $j\neq\theta$, through the lifting of censoring.
Consider two parallel cases.

\emph{Case (i):} $x'_{i\theta}<x_{i\theta}<z_\theta$ (still deprived).  
If $i$ is poor both before and after, $\rho_i(k)=1$ and
\[
\Delta P
= P(X';z,k)-P(X;z,k)
= \frac{1}{n}\,w_\theta\big(s'_{i\theta}-s_{i\theta}\big) \;\ge\; 0,
\]
with strict inequality under the positive-mass condition of the own-person case.
If $i$ is non-poor both before and after, then $\rho_i(k)=\rho'_i(k)=0$ and $\Delta P=0$
(by poverty focus).

\emph{Case (ii):} $x'_{i\theta}<z_\theta\le x_{i\theta}$ (crosses into deprivation).
If $i$ was already poor, a single new term enters and $\Delta P=\frac{1}{n}\,w_\theta\,s'_{i\theta}\ge 0$.
If the change makes $i$ newly poor, censoring is lifted from all of $i$'s deprived indicators, so
\[
\Delta P
= \frac{1}{n}\Big(w_\theta\,s'_{i\theta} \;+\; \sum_{j\neq\theta} w_j\,s_{ij}\,g^0_{ij}\Big)\;\ge\;0,
\]
with every added term non-negative; if $i$ remains non-poor, $\Delta P=0$. The change is strictly positive
whenever $\rho'_i(k)=1$ and $F_\theta(x'_{i\theta})<1$. Therefore aggregate dimensional
monotonicity holds under anchored CDFs.

In all cases, a worsening cannot reduce $P_i$; and under anchored CDFs the aggregate
index $P(X;z,k)$ weakly increases (strictly when a poor person worsens in a deprived
indicator or becomes newly deprived and poor, under the positive-mass condition above).
For $P_\alpha$ with $\alpha>1$, the same arguments apply with $s_{i\theta}$ replaced by
$s_{i\theta}^{\alpha}$: since $t\mapsto t^{\alpha}$ is strictly increasing on $[0,1]$,
every inequality above, and its strictness, is preserved.

\bigskip
\textbf{Subgroup Consistency and Decomposability.} Let the population be partitioned into
disjoint subgroups $\ell_1,\ldots,\ell_L$ with sizes $n^\ell$ (so $\sum_\ell n^\ell=n$), and let
$X^\ell$ denote the restriction of $X$ to subgroup $\ell$. Because the CDFs are anchored and
$(z,k,w)$ are common, each cell term $g^{\alpha}_{ij}(k)$ takes the \emph{same} value whether
computed in the full sample or within any subgroup.

\emph{(i) Subgroup consistency.}
Let $X'$ differ from $X$ only within subgroup $\ell'$, and assume subgroup sizes are unchanged.
Then, grouping the average by subgroups,
\[
P(X';z,k)-P(X;z,k)
=\frac{n^{\ell'}}{n}\Big(P(X'^{\,\ell'};z,k)-P(X^{\ell'};z,k)\Big).
\]
Hence the overall index moves in the \emph{same direction} as the index of the changed subgroup:
if $P(X'^{\,\ell'};z,k)<P(X^{\ell'};z,k)$, then $P(X';z,k)<P(X;z,k)$ (and conversely if it worsens).

\emph{(ii) Population subgroup decomposability.} Grouping the population sum by subgroups,
\[
\begin{aligned}
P(X;z,k)
&=\frac{1}{n}\sum_{\ell=1}^{L}\sum_{i\in \ell}\sum_{j=1}^{d} w_j\,g^{\alpha}_{ij}(k)
 =\sum_{\ell=1}^{L}\frac{n^\ell}{n}\Bigg(\frac{1}{n^\ell}\sum_{i\in \ell}\sum_{j=1}^{d} w_j\,g^{\alpha}_{ij}(k)\Bigg)\\
&=\sum_{\ell=1}^{L}\frac{n^\ell}{n}\;P(X^\ell;z,k),
\end{aligned}
\]
where the last equality uses that, under the common (anchored) CDFs, the parenthesised term equals
the subgroup index $P(X^\ell;z,k)$.

\emph{Weighted analogue.} With sampling weights, replace $n^\ell/n$ by $W^\ell/W$, where
$W^\ell=\sum_{i\in \ell}\omega_i$ and $W=\sum_{i=1}^n\omega_i$.

\bigskip
\textbf{Remaining properties.} Symmetry, replication invariance, bounds and normalization, ordinal invariance, and weak rearrangement follow directly from the construction of $s_{ij}$ and the structure of $P$, as established in Section~\ref{section: properties}. \qed

\bigskip
\textbf{Proof of Proposition~\ref{prop:within-dim}.}
On the deprived range $x_{ij}<z_j$, the condition $z_j>0$ implies that the
Alkire--Foster gap $g^{AF}_{ij}$ is strictly decreasing in $x_{ij}$.
Let $a$ and $b$ be two individuals deprived in dimension $j$. If $x_{aj}<x_{bj}$, then
$g^{AF}_{aj}>g^{AF}_{bj}$. Moreover, by the assumption on $F_j$,
$F_j(x_{aj})<F_j(x_{bj})$; since $1-F_j(\min_i x_{ij})>0$, it follows that
$s_{aj}>s_{bj}$. If $x_{aj}=x_{bj}$, then both scores are equal for the two individuals.
Thus $g^{AF}_{ij}$ and $s_{ij}$ generate the same within-dimension ranking of deprived
observations. The argument uses only the strict monotonicity of the cardinal comparator
on the deprived range, so it applies unchanged to any cardinal deprivation score that is
strictly decreasing in achievement there. \qed

\newpage
\section{Indicator definitions, thresholds, and deprivation rates}
\label{appendix: C}
\setcounter{table}{0}
\renewcommand{\thetable}{C\arabic{table}}

Table~\ref{table:C_defs} lists the indicators, deprivation thresholds, and weights of the two applications in Section~\ref{section: illustrations}: the eight indicators of the Brazil Cadastro Único index and the three cardinal indicators of the cross-country MICS comparison.

\begin{table}[H]
  \centering
  \caption{Indicators, deprivation thresholds, and weights, by application}
  \footnotesize
  \setlength{\tabcolsep}{3pt}
  \begin{tabular}{@{}l p{2.1cm} p{2.9cm} l p{3.3cm} c@{}}
    \toprule
    Dimension & Indicator & Measure & Scale & Deprivation threshold & Weight \\
    \midrule
    \multicolumn{6}{@{}l}{\textit{Panel A. Brazil, Cadastro Único 2018}} \\
    \addlinespace[2pt]
    Education & Adult attainment & Highest years of schooling among adults 18+ & Cardinal & Below 8 completed years & $\tfrac{1}{6}$ \\
    \addlinespace[1pt]
    Education & Child age-for-grade gap & Years behind expected grade, largest among children 6--17 & Cardinal & At least one year behind & $\tfrac{1}{6}$ \\
    \addlinespace[1pt]
    Housing & Housing materials & Inadequate floor or wall material & Ordinal & At least one inadequate component & $\tfrac{1}{6}$ \\
    \addlinespace[1pt]
    Housing & Crowding & Persons per bedroom & Cardinal & More than two persons per bedroom & $\tfrac{1}{6}$ \\
    \addlinespace[1pt]
    Services & Drinking water & Water source and piping ladder & Ordinal & Below piped supply from the general network & $\tfrac{1}{12}$ \\
    \addlinespace[1pt]
    Services & Sanitation & Toilet facility and disposal ladder & Ordinal & Below a septic-tank toilet & $\tfrac{1}{12}$ \\
    \addlinespace[1pt]
    Services & Garbage disposal & Refuse-disposal type & Ordinal & Not collected (burned, buried, or dumped) & $\tfrac{1}{12}$ \\
    \addlinespace[1pt]
    Services & Electricity & Lighting source & Ordinal & No electric lighting & $\tfrac{1}{12}$ \\
    \midrule
    \multicolumn{6}{@{}l}{\textit{Panel B. Cross-country MICS}} \\
    \addlinespace[2pt]
    Education & Years of schooling & Highest years of schooling in the household & Cardinal & No member has completed six years of schooling & $\tfrac{1}{3}$ \\
    \addlinespace[1pt]
    Nutrition & Nutritional status & Lowest child weight- or height-for-age $z$-score & Cardinal & Child $z$-score below $-2$ & $\tfrac{1}{3}$ \\
    \addlinespace[1pt]
    Assets & Asset ownership & Count of assets owned & Cardinal & Fewer than two assets & $\tfrac{1}{3}$ \\
    \bottomrule
  \end{tabular}
  \begin{tablenotes}
    \notesize
    \item \textit{Notes:} Weights are given in the last column and sum to one across dimensions. Education and nutrition thresholds follow the global MPI. Variables for which higher raw values indicate worse conditions (persons per bedroom, years behind expected grade) are sign-reversed, so that higher values always represent better achievements. In the MICS comparison (Section~\ref{section: illustration_mics}), the AF poverty gap is the standard normalized shortfall from the threshold, $(z_j-x_{ij})/z_j$, except where the indicator's scale requires adjustment: for assets, the gap is the normalized shortfall in the number of assets owned below the two-asset threshold; for nutrition, whose $z$-scores are interval-scale and have no natural zero, the shortfall is normalized by a fixed band of four $z$-score units, $(-2-x_{ij})/4$, capped at one below $-6$, the lower bound of biologically plausible weight- and height-for-age $z$-scores under the WHO growth standards \citep{who_unicef_biv2026}. The positional depth score requires no such floor.
  \end{tablenotes}
  \label{table:C_defs}
\end{table}

Table~\ref{table:C_deprivrates} reports the survey-weighted deprivation rate of each indicator in each of the 33 countries, together with the corresponding poverty measures.

\begin{table}[H]
  \centering
  \caption{Deprivation rates by indicator, MICS countries}
  \footnotesize
  \setlength{\tabcolsep}{3pt}
  \begin{tabular}{lcrccccc}
    \toprule
    Country & Survey & $n$ & Education & Nutrition & Assets & $H$ & $P$ \\
    \midrule
Afghanistan & 2022-23 & 197,062 & 41.5 & 50.2 & 36.7 & 0.412 & 0.266 \\
Algeria & 2019 & 145,890 & 5.1 & 5.8 & 0.6 & 0.007 & 0.005 \\
Argentina & 2019-20 &  48,285 & 2.1 & 4.7 & 0.7 & 0.002 & 0.001 \\
Azerbaijan & 2023 &  40,323 & 0.4 & 2.4 & 0.4 & 0.000 & 0.000 \\
Benin & 2021-22 &  82,534 & 41.5 & 30.7 & 26.6 & 0.283 & 0.182 \\
Central African Republic & 2019 &  44,182 & 46.3 & 44.8 & 70.8 & 0.571 & 0.365 \\
Chad & 2019 & 111,539 & 58.4 & 45.7 & 46.7 & 0.518 & 0.339 \\
Comoros & 2022 &  30,952 & 13.0 & 16.3 & 23.5 & 0.111 & 0.073 \\
Congo DR & 2017-18 & 102,737 & 16.6 & 42.4 & 54.3 & 0.349 & 0.221 \\
Cuba & 2019 &  38,356 & 2.2 & 2.0 & 6.5 & 0.011 & 0.007 \\
Dominican Republic & 2019 &  94,040 & 7.9 & 3.0 & 4.5 & 0.020 & 0.013 \\
Eswatini & 2021-22 &   9,628 & 4.8 & 15.8 & 13.1 & 0.039 & 0.024 \\
Georgia & 2018 &  34,988 & 0.3 & 1.9 & 1.0 & 0.001 & 0.000 \\
Guinea-Bissau & 2019 &  48,652 & 41.1 & 33.0 & 16.2 & 0.250 & 0.151 \\
Guyana & 2019-20 &  23,869 & 2.9 & 6.3 & 6.5 & 0.015 & 0.010 \\
Honduras & 2019 &  76,863 & 11.3 & 10.4 & 10.0 & 0.061 & 0.041 \\
Kiribati & 2019 &  17,346 & 0.4 & 14.3 & 24.8 & 0.045 & 0.028 \\
Kyrgyzstan & 2023 &  27,105 & 0.3 & 9.5 & 0.9 & 0.001 & 0.001 \\
Lao & 2023 &  90,576 & 20.7 & 19.7 & 6.9 & 0.090 & 0.060 \\
Malawi & 2019-20 & 107,695 & 28.1 & 24.3 & 55.7 & 0.322 & 0.202 \\
Mongolia & 2018 &  48,361 & 3.4 & 6.1 & 1.5 & 0.007 & 0.005 \\
Montenegro & 2018 &  10,454 & 2.2 & 2.4 & 0.1 & 0.001 & 0.001 \\
Samoa & 2019-20 &  19,709 & 0.2 & 10.4 & 7.3 & 0.012 & 0.008 \\
Serbia & 2019 &  16,997 & 2.0 & 0.9 & 0.1 & 0.000 & 0.000 \\
Suriname & 2018 &  28,918 & 7.1 & 5.0 & 4.2 & 0.022 & 0.015 \\
Thailand & 2021-22 &  95,423 & 10.5 & 2.8 & 1.1 & 0.006 & 0.004 \\
Togo & 2017 &  34,522 & 20.9 & 21.3 & 21.9 & 0.152 & 0.099 \\
Tonga & 2019 &  12,671 & 0.2 & 2.0 & 2.2 & 0.001 & 0.000 \\
Tunisia & 2023 &  31,088 & 10.9 & 3.6 & 1.0 & 0.008 & 0.005 \\
Turkmenistan & 2019 &  30,752 & 0.0 & 7.1 & 0.0 & 0.000 & 0.000 \\
Tuvalu & 2019-20 &   4,025 & 0.3 & 7.6 & 0.6 & 0.001 & 0.001 \\
Yemen & 2022-23 &  43,635 & 9.1 & 38.6 & 24.1 & 0.159 & 0.104 \\
Zimbabwe & 2019 &  42,885 & 3.9 & 17.4 & 29.1 & 0.082 & 0.050 \\
    \bottomrule
  \end{tabular}
  \begin{tablenotes}
    \notesize
    \item \textit{Notes:} Survey-weighted share of the population deprived in each indicator (uncensored headcount ratio, in percent), using the global MPI deprivation thresholds; $n$ is the sample size. $H$ is the population share poor at the $k=2/3$ cutoff of Section~\ref{section: illustration_mics} and $P = H \cdot A \cdot S$ the corresponding positional poverty gap index, both reported as proportions.
  \end{tablenotes}
  \label{table:C_deprivrates}
\end{table}

\end{document}